  \providecommand\BibTeX{{%
    \normalfont B\kern-0.5em{\scshape i\kern-0.25em b}\kern-0.8em\TeX}}}
\newif\ifapx 
\newif\ifsupp 
\newcommand\apxmode{append}
\renewcommand\apxmode{inline}
\renewcommand\apxmode{strip}
\newtheorem{problem}{Problem}
\newtheorem{example}{Example}
\newtheorem{definition}{Definition}
\newtheorem{assumption}{Assumption}
\newtheorem{theorem}{Theorem}
\newtheorem{lemma}[theorem]{Lemma}
\crefname{algocf}{Algorithm}{Algorithms}
\begin{document}

\title[Finding Favourite Tuples on Data Streams with Provably Few Comparisons]{Finding Favourite Tuples on Data Streams with\\Provably Few Comparisons}

\author{Guangyi Zhang}
\authornote{This work was done while the author was with KTH Royal Institute of Technology.}
\affiliation{%
  \institution{Shenzhen Institute of Computing Sciences} 
  \city{Shenzhen}
  \country{China}
}
\email{zhangguangyi@sics.ac.cn}

\author{Nikolaj Tatti}
\affiliation{%
  \institution{HIIT, University of Helsinki}
  \city{Helsinki}
  \country{Finland}}
\email{nikolaj.tatti@helsinki.fi}

\author{Aristides Gionis}
\affiliation{%
  \institution{KTH Royal Institute of Technology}
  \city{Stockholm}
  \country{Sweden}
}
\email{argioni@kth.se}


\begin{abstract}
One of the most fundamental tasks in data science is to assist 
a user with unknown preferences in finding high-utility tuples within a large database.
To accurately elicit the unknown user preferences, 
a widely-adopted way is by asking the user to compare pairs of tuples.
In this paper, we study the problem of identifying one or more high-utility tuples 
by adaptively receiving user input on a minimum number of pairwise comparisons.
We devise a single-pass streaming algorithm,
which processes each tuple in the stream at most once, 
while ensuring that the memory size and the number of requested comparisons 
are in the worst case logarithmic in $n$,
where $n$ is the number of all tuples.
An important variant of the problem, 
which can help to reduce human error in comparisons,
is to allow users to declare ties when confronted with pairs of tuples of nearly equal utility.
We show that the theoretical guarantees of our method can be maintained
for this important problem variant. 
In addition, we show how to enhance existing pruning techniques in the literature 
by leveraging powerful tools from mathematical programming.
Finally, we systematically evaluate all proposed algorithms over both synthetic and 
real-life datasets,
examine their scalability, and 
demonstrate their superior performance over existing methods. 
\end{abstract}

\begin{CCSXML}
<ccs2012>
   <concept>
       <concept_id>10002951.10003317.10003331</concept_id>
       <concept_desc>Information systems~Users and interactive retrieval</concept_desc>
       <concept_significance>500</concept_significance>
       </concept>
   <concept>
       <concept_id>10003752.10010070.10010111</concept_id>
       <concept_desc>Theory of computation~Database theory</concept_desc>
       <concept_significance>500</concept_significance>
       </concept>
   <concept>
       <concept_id>10003752.10010070.10010071.10010286</concept_id>
       <concept_desc>Theory of computation~Active learning</concept_desc>
       <concept_significance>300</concept_significance>
       </concept>
 </ccs2012>
\end{CCSXML}

\ccsdesc[500]{Information systems~Users and interactive retrieval}
\ccsdesc[500]{Theory of computation~Database theory}
\ccsdesc[300]{Theory of computation~Active learning}

\keywords{preference learning, user interaction, streaming algorithms}


\maketitle

\ifsupp 
\ifapx
	\begin{toappendix} 
	

\section{Proofs for Section \ref{se:single}}
\label{se:single:proofs}

\citet{kane2018generalized} proved a powerful local lemma, which states that
among a sufficiently large set of vectors from the unit $d$-ball $\ball^d$, 
there must exist some vector that can be approximately represented as a special non-negative linear combination of others.
\begin{lemma}[\cite{kane2018generalized}, Claim 15]
\label{lemma:comb}
Given $\vecx_1,\ldots,\vecx_\infd \in \ball^d$, for any $\Creg > 0$, if $\infd \ge \Cinfd$,
then there exists $a \in [\infd]$ such that
\begin{align}
	\vecx_a = \vecx_1 + \sum_{j = 1}^{a-2} \alpha_j (\vecx_{j+1} - \vecx_j) + \vece,
	\label{eq:comb}
\end{align}
where $\norm{\vece}_2 \le \Creg$ and $\alpha_j \in \{0,1,2\}$.
\end{lemma}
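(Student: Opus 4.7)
The plan is to apply a packing pigeonhole argument to the $2^{\infd-2}$ combinations with binary coefficients, and then convert the resulting near-collision into the required approximation of some $\vecx_a$ via a telescoping identity.

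First I would restrict attention to binary coefficients: for each $\alpha \in \{0,1\}^{\infd-2}$, let $s_\alpha := \vecx_1 + \sum_{j=1}^{\infd-2}\alpha_j(\vecx_{j+1}-\vecx_j)$. Since each $\vecx_i \in \ball^d$, every $s_\alpha$ lies in a Euclidean ball of radius $O(\infd)$ about the origin, and a standard covering bound shows that such a ball admits at most $(O(\infd/\Creg))^d$ pairwise $\Creg$-separated points. Once $\infd = \Omega(d\log(d/\Creg))$, the count $2^{\infd-2}$ exceeds the covering number, so pigeonhole forces two distinct patterns $\alpha \ne \alpha'$ with $\norm{s_\alpha - s_{\alpha'}}_2 \le \Creg$. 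Writing $d_j := \vecx_{j+1}-\vecx_j$ and $\delta_j := \alpha_j - \alpha'_j \in \{-1,0,1\}$, this yields the near-linear relation $\norm{\sum_j \delta_j d_j}_2 \le \Creg$.

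Second, I would let $k$ be the largest index with $\delta_k \ne 0$, so $\delta_k \in \{-1,+1\}$. Isolating $d_k$ from the near-relation and substituting into the telescoping identity $\vecx_{k+1} = \vecx_1 + \sum_{j=1}^{k} d_j$ gives
\[
  \vecx_{k+1} \;=\; \vecx_1 + \sum_{j=1}^{k-1}(1 - \delta_j/\delta_k)\,d_j \;+\; \vece,
\]
for some residual $\vece$ with $\norm{\vece}_2 \le \Creg$. Because $\delta_j \in \{-1,0,1\}$ and $|\delta_k| = 1$, every coefficient $1-\delta_j/\delta_k$ automatically lies in $\{0,1,2\}$; taking $a := k+1$ (so $a-2 = k-1$) matches the form asserted in the lemma.

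The main quantitative task is pinning down $\Cinfd$: because $\infd$ appears on both sides of $2^{\infd-2} > (O(\infd/\Creg))^d$, I would solve the corresponding fixed-point inequality and conclude $\Cinfd = \Theta(d\log(d/\Creg))$. The step that might look surprising at first glance --- that binary $\{0,1\}$ patterns already suffice to realise the ternary output set $\{0,1,2\}$ --- falls directly out of the algebraic substitution, so I do not anticipate any deeper obstacle beyond careful constant bookkeeping in the packing bound.
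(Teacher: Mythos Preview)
The paper does not prove this lemma at all: it is quoted verbatim as Claim~15 of \cite{kane2018generalized} and used as a black box. Your pigeonhole-on-binary-patterns argument followed by the telescoping substitution is exactly the proof given in that reference, and your algebra (in particular the observation that $1-\delta_j/\delta_k\in\{0,1,2\}$ whenever $\delta\in\{-1,0,1\}^{k}$ with $|\delta_k|=1$) is correct, so the proposal is sound and matches the cited source.
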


Let $S = \{\vecx_1, \ldots, \vecx_\infd\}$.
\cref{lemma:comb} can be easily extended to hold for the intrinsic dimension of $S$,
by first applying \cref{lemma:comb} to the minimal representation $\vecy_1,\ldots,\vecy_\infd \in \ball^{d'}$ of $S$.
\note{
Suppose
\[
	\vecy_a = \vecy_1 + \sum_{j = 1}^{a-2} \alpha_j (\vecy_{j+1} - \vecy_j) + \vece',
\]
where $\norm{\vece'}_2 \le \Creg$.
By definition, we know that 
$\vecx_i = M \vecy_i$ where $M \in \reals^{d,d'}$ and columns of $M$ are orthonomal.
Then, we have $\norm{\vece}_2 \le \Creg$,
where $\vece = M\vece'$.
}

In \cref{lemma:comb}, we have $S-\vecx_a \infer \vecx_a$,
where $S-\vecx$ is a shorthand for $S \setminus \{\vecx\}$.
Note that this is exactly the condition we use in Step~\ref{step:filter} in \cref{alg:single} for pruning.
Denote by $\filter(S)$ the set of all such pruned tuples, i.e.,
\begin{align}
	\filter(S) = \{ \vecx \in \reals^d: S \infer \vecx \}.
\end{align}

Given any set $S$ of size $4\infd$,
at least 3/4 fraction of $S$ can be pruned by other tuples in $S$, by repeatedly applying \cref{lemma:comb}.
\begin{lemma}
\label{lemma:4t}
Given a sorted set $S$ of size at least $4\infd$, where $\infd = \Cinfd$, we have
\[
	|\{ \vecx \in S: S-\vecx \infer \vecx \}| \ge \frac34 |S|.
\]
\end{lemma}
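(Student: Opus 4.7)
The plan is a greedy peeling argument driven directly by \cref{lemma:comb}. Initialize $R := S$ (with the sort inherited from $S$), and iterate the following while $|R| \ge \infd$: apply \cref{lemma:comb} to $R$ to extract some $\vecx \in R$ satisfying $R - \vecx \infer \vecx$, add $\vecx$ to a running collection $P$, and delete it from $R$. Two small facts keep the loop running: deletion preserves the sort order on $R$, so \cref{lemma:comb} continues to apply after a harmless relabeling; and the $\infer$ relation is monotone in its left argument, since the combination witnessing $R - \vecx \infer \vecx$ uses only vectors in $R - \vecx \subseteq S - \vecx$ and therefore remains a valid combination over the larger set. Hence every peeled $\vecx$ lands in the target set $\{\vecx \in S : S - \vecx \infer \vecx\}$.

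Once the loop terminates, $|R| = \infd - 1$ and $|P| = |S| - \infd + 1$. Using the hypothesis $|S| \ge 4\infd$,
\[
    |P| \;=\; |S| - \infd + 1 \;\ge\; |S| - \tfrac{|S|}{4} + 1 \;\ge\; \tfrac{3}{4}\,|S|,
\]
which is the required lower bound.

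The main place to tread carefully is the monotonicity of $\infer$. Concretely, one must verify that the telescoping shape prescribed by \cref{lemma:comb} (coefficients $\alpha_j \in \{0,1,2\}$ and residual norm at most $\Creg$) is preserved when one simply views the same combination over a superset, which boils down to the observation that the certificate does not ``know'' about the ambient set beyond the vectors it actually uses. I expect this to be a minor conceptual checkpoint rather than a genuine obstacle, after which the rest of the argument is just the bookkeeping and the one-line arithmetic above.
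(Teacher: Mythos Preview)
Your proposal is correct and follows essentially the same approach as the paper: repeatedly apply \cref{lemma:comb} to peel off one element at a time until fewer than $\infd$ remain, then count. The paper's proof is a one-liner (``apply \cref{lemma:comb} repeatedly to $S$ until only $\infd$ entries remain''), and you have simply made explicit the monotonicity step---that $R - \vecx \infer \vecx$ lifts to $S - \vecx \infer \vecx$ via telescoping the consecutive differences of $R - \vecx$ into those of $S - \vecx$---which the paper leaves implicit.
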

\begin{proof}
since $|S| \ge 4\infd$,
we can apply \cref{lemma:comb} repeatedly to $S$ until only \infd entries remain.
\end{proof}
\note{Importantly, $\{ \vecx \in S: S-\vecx \infer \vecx \}$ do not depend on the arrival order of tuples in $S$.}

As a consequence of \cref{lemma:4t},
the same fraction of current tuples \X can be pruned by a random sample set $S$ of a sufficient size in expectation.
\begin{lemma}
\label{lemma:filter}
Given a set of tuples \X, and
a random sample set $S \subseteq \X$ of size $4\infd$ where $\infd = \ceil{\Cinfd}$, 
we have
\[
	\expect \left[ |\filter(S) \cap \X| \right] \ge \frac34 |\X|,
\]
where the expectation is taken over $S$.
\end{lemma}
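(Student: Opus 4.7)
The plan is a double-counting argument on pairs $(S, \vecx)$ with $S$ a size-$4\infd$ subset of $\X$ and $\vecx \in \filter(S) \cap \X$. Setting $n = |\X|$ and using linearity of expectation,
\[
\expect\bigl[|\filter(S) \cap \X|\bigr] \;=\; \binom{n}{4\infd}^{-1} N,
\]
where $N = \bigl|\{(S, \vecx) : |S| = 4\infd,\ \vecx \in \X,\ \vecx \in \filter(S)\}\bigr|$. It suffices to show $N \ge \binom{n}{4\infd} \cdot \tfrac{3n}{4}$.

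First I would split $N$ according to whether $\vecx \in S$. For the case $\vecx \in S$, \cref{lemma:4t} applies directly to each fixed $S$ and guarantees at least $\tfrac{3}{4}\cdot 4\infd = 3\infd$ elements $\vecx \in S$ with $S - \vecx \infer \vecx$, so this contribution is at least $\binom{n}{4\infd} \cdot 3\infd$.

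For the case $\vecx \notin S$, I would reparametrize the sum by $U = S \cup \{\vecx\} \in \binom{\X}{4\infd + 1}$ together with a distinguished element $\vecx \in U$. Since $|U| = 4\infd + 1 > 4\infd$, \cref{lemma:4t} applies to $U$ and yields at least $\tfrac{3}{4}(4\infd + 1)$ valid choices of $\vecx$ per fixed $U$. Using the identity $\binom{n}{4\infd+1}(4\infd+1) = \binom{n}{4\infd}(n - 4\infd)$, this case contributes at least $\binom{n}{4\infd} \cdot \tfrac{3}{4}(n - 4\infd)$. Adding the two contributions gives
\[
N \;\ge\; \binom{n}{4\infd}\bigl[3\infd + \tfrac{3}{4}(n - 4\infd)\bigr] \;=\; \binom{n}{4\infd} \cdot \tfrac{3n}{4},
\]
and dividing by $\binom{n}{4\infd}$ yields the claim.

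The main subtle step is the reparametrization for $\vecx \notin S$: one has to verify the bijection between pairs $(S, \vecx)$ with $\vecx \notin S$ and pairs $(U, \vecx)$ with $\vecx \in U$, $|U| = 4\infd + 1$, and to confirm that \cref{lemma:4t} is indeed applicable to $U$ (which it is, since the lemma requires only size at least $4\infd$). Once these are in place, the $\infd$-dependent terms cancel and the bound falls out cleanly from elementary arithmetic.
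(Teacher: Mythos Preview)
Your argument is correct and takes a genuinely different route from the paper. The paper uses a probabilistic symmetrization: it lets $\vecy$ be the last element added to $S$, sets $T = S - \vecy$, and compares $\Pr[T \infer \vecx \mid T]$ for a uniform $\vecx \in \X$ against $\Pr[T \infer \vecy \mid T]$; then, conditioned on $S$, each element of $S$ is equally likely to be last, so \cref{lemma:4t} applied to $S$ itself gives $\Pr[S - \vecy \infer \vecy \mid S] \ge 3/4$. Your double count is purely combinatorial and applies \cref{lemma:4t} to the augmented set $U = S \cup \{\vecx\}$ of size $4\infd + 1$ rather than to $S$; this sidesteps the probabilistic reasoning and the monotonicity step $\filter(T) \subseteq \filter(S)$ that the paper needs.

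One small wrinkle to clean up: in your case $\vecx \in S$ you invoke \cref{lemma:4t} to obtain $S - \vecx \infer \vecx$, but $N$ counts pairs with $\vecx \in \filter(S)$, i.e., $S \infer \vecx$, which is a different condition. The fix is immediate: every $\vecx_k \in S$ satisfies $S \infer \vecx_k$ trivially via the telescoping identity $\vecx_k = \vecx_1 + \sum_{j=1}^{k-1}(\vecx_{j+1} - \vecx_j)$, so this case contributes the full $4\infd$ per $S$ without any appeal to \cref{lemma:4t}. The arithmetic then closes with room to spare.

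A trade-off worth noting: the paper's symmetrization is phrased over random sequences and is later adapted (with extra work) to the setting with ties in \cref{lemma:filter-robust}, where the filter depends on the arrival order of $S$ and not just on the underlying set. Your set-based double count does not carry over directly to that ordered setting.
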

\ifapx
\begin{proof}\deferredproof\end{proof}
\begin{appendixproof}[Proof of \cref{lemma:filter}]
\else
\begin{appendixproof}
\fi
The proof is by a symmetrization argument introduced by \citet{kane2017active}.
Let $\vecy$ be the last tuple added into $S$. Write $T = S - \vecy$.
Given $T$, the distribution of $\vecy$ is a uniform distribution from $\X \setminus T$.
Let $\vecx$ be a random sample from $X$. 
Since $T \subseteq \filter(T)$, we have
\[
	\Pr \left[  T \infer \vecx \mid T \right] \geq
	\Pr \left[  T \infer \vecy \mid T \right].
\]
Then,
\begin{align*}
    \expect_S \left[ \frac{|\filter(S) \cap \X|}{|\X|} \right] 
    &= \expect_{S} \left[ \Pr \left[ S \infer \vecx \mid S \right] \right] \\
    &= \expect_{S} \left[ \Pr \left[ T + \vecy \infer \vecx \mid S \right] \right] \\
    &\ge \expect_{S} \left[ \Pr \left[ T \infer \vecx \mid S \right] \right] \\
    &= \expect_{T} \left[ \Pr \left[ T \infer \vecx \mid T \right] \right] \\
    &\ge \expect_{T} \left[ \Pr \left[ T \infer \vecy \mid T \right] \right] \\
    &= \expect_{S} \left[ \indicator \left[ S - \vecy \infer \vecy \mid S \right] \right].
\end{align*}

In order to bound the right-hand side, notice that 
when conditioned on $S$, every permutation of $S$ is equally probable over a random-order stream, 
which implies that every tuple in $S$ is equally probable to be the last tuple \vecy.
Hence, we have $\Pr \left[ S - \vecy \infer \vecy \mid S \right] \geq 3/4$ by \cref{lemma:4t},
proving immediately the claim.
\end{appendixproof}
\note{It is important to note that, the above proof doesn't imply that
for \emph{every} $S$, when conditioning on it, one can achieve 
\[
	\Pr \left[ S \infer \vecx \mid S \right] \ge 3/4,
\]
that is, every $S$ can prune at least 3/4 of \X, which is too strong to be true.
After we replace \vecx by \vecy, the 3/4 is obtained by taking average over \vecy, which is part of outer expectation over $S$.
}

Another important issue to handle is to ensure that our pruning strategy will not discard all feasible tuples.
This is prevented by keeping track of the best tuple in any sample set so far, and guaranteed by \cref{thm:anytime}.
\begin{proof}[Proof of \cref{thm:anytime}]
Denote by \D all tuples that have arrived so far.
Suppose $\vecx^*$ is the best tuple among \D.
Tuple $\vecx^*$ is either collected into our sample sets, 
or pruned by some sample set $S$.
In the former case, our statement is trivially true.
In the latter case, suppose $S = \{\vecx_1,\ldots\}$, where $\vecx_1$ is the best tuple in $S$.
If $\vecx_1$ is feasible, then $\bestsofar$ is feasible as well, as it is at least as good as $\vecx_1$.
If $\vecx_1$ is infeasible, i.e., $\util(\vecx^*) - \util(\vecx_1) > \Creg$, then $\vecx^*$ cannot be pruned by $S$ by design, a contradiction.
This completes the proof.
\end{proof}

Before proving \cref{thm:num_of_S},
we briefly summarize the hypergeometric tail inequality below~\citep{skala2013hypergeometric}.
\begin{lemma}[Hypergeometric tail inequality~\citep{skala2013hypergeometric}]
\label{lemma:tail}
Draw $n$ random balls without replacement from a universe of $N$ red and blue balls, and
let $i$ be a random variable of the number of red balls that are drawn.
Then, for any $t > 0$, we have 
\[
	\Pr[i \ge \expect[i] + t n] \le e^{-2t^2n},
\]
and
\[
	\Pr[i \le \expect[i] - t n] \le e^{-2t^2n}.
\]
\end{lemma}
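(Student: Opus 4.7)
The plan is to apply the standard Chernoff--Hoeffding template, adapted to the sampling-without-replacement setting via the classical reduction of Hoeffding's 1963 paper. I would write $i = \sum_{j=1}^n X_j$, where $X_j \in \{0,1\}$ indicates whether the $j$-th drawn ball is red, and let $R$ be the number of red balls so that $p = R/N$ and $\expect[i] = np$. Although the indicators $X_j$ are dependent, each one has the same Bernoulli$(p)$ marginal, so the obstacle is purely that they are not independent.

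For the upper tail, the first step is the exponential Markov inequality: for any $s > 0$,
\[
	\Pr[i \ge \expect[i] + tn] \le e^{-s(\expect[i] + tn)}\, \expect[e^{si}].
\]
The crucial step is then to bound the moment generating function $\expect[e^{si}]$. Here I would invoke Hoeffding's reduction theorem, which says that for any convex function $g$, the expectation $\expect[g(\sum_j X_j)]$ under sampling without replacement is dominated by the corresponding expectation under sampling with replacement. Applied to the convex map $g(x) = e^{sx}$, this reduces the analysis to the i.i.d.\ case, where Hoeffding's lemma for $[0,1]$-valued summands gives $\expect[e^{s(X_j - p)}] \le e^{s^2/8}$. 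Taking the product over the $n$ independent copies yields $\expect[e^{s(i - \expect[i])}] \le e^{n s^2/8}$.

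Plugging this back into the Markov bound and optimizing over $s$ (the choice $s = 4t$ is optimal) produces the desired inequality $e^{-2 t^2 n}$. The lower tail follows by a symmetric argument: swapping the roles of red and blue balls turns the lower-tail event for $i$ into an upper-tail event for $n - i$, and the same derivation applies. The one non-trivial ingredient is Hoeffding's reduction, which relies on a combinatorial convexity argument that sampling without replacement is more concentrated than sampling with replacement; I would cite it rather than re-derive it, exactly as \citet{skala2013hypergeometric} does. Since the lemma is being used only as a convenient summary of a textbook result (and is invoked later solely as a black box in the proof of \cref{thm:num_of_S}), citing the reference is arguably all that is needed, and indeed the excerpt itself opts for this route.
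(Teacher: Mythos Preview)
Your sketch is correct and is exactly the standard derivation found in the cited survey; the paper itself offers no proof at all, simply stating the lemma with a citation to \citet{skala2013hypergeometric} and using it as a black box in the proof of \cref{thm:num_of_S}. You anticipated this correctly in your final paragraph, so there is nothing to add.
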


\begin{proof}[Proof of \cref{thm:num_of_S}]
The feasibility of the returned tuple \bestsofar is due to \cref{thm:anytime}.
In the rest of the proof, we upper bound the size of every sample and the number of samples we keep in the sequence~\seq.

For any sample $S$ with at least $4\infd$ samples and any subset $\X \subseteq \D$, 
let $\X' = \filter(S) \cap \X$ and by \cref{lemma:filter} we have
$\expect [ |\X'| ] \ge \frac34 |\X|$.
In particular, let $\X = \Pl$ and we have 
$\expect[|\Pl'|] \ge \frac34 |\Pl|$ and $|\Pl| = \nPl$.
Then, 
\begin{align*}
	\Pr\left[ |\Pl'| < \frac58 |\Pl| \right] 
	&= \Pr\left[ |\Pl'| < \frac34 |\Pl| - \frac18 \nPl\right]  \\
	&\le \Pr\left[ |\Pl'| < \expect[|\Pl'| - \frac18 \nPl] \right] 
	\le e^{-2 \nPl /8^2},
\end{align*}
where the last step invokes \cref{lemma:tail}.
Since there can be at most \nD samples,
the probability that any sample fails to pass the pool test is upper bounded by $\nD e^{-2 \nPl /8^2}$.

We continue to upper bound the number of sample sets.
At most $\ceil{\log(\nD)}$ sample sets suffice if every sample can prune at least half of the remaining tuples.
Fix an arbitrary sample $S$, and let \X to be the set of remaining tuples.
The pool \Pl is a random sample from \X of size \nPl. Thus,
$\expect[|\Pl'|] / \nPl = |\X'| / |\X|$. Consequently, if  $|\X'| < |\X|/2$, then $\expect[|\Pl'|] < \nPl/2$
and
\begin{align*}
	\Pr\left[ |\Pl'| \ge \frac58 |\Pl| \right] 
	\le \Pr\left[ |\Pl'| \ge \expect[|\Pl'|] + \frac18 \nPl \right] 
	\le e^{-2 \nPl /8^2}.
\end{align*}
Similar to the above, 
the probability that any bad sample passes the test is upper bounded by $\nD e^{-2 \nPl /8^2}$.

Combining the two cases above,
the total failure probability is $2\nD e^{-2 \nPl /8^2} \leq 1/\nD$
Hence, with probability at least $1-1/\nD$,
it is sufficient to use $\ceil{\log(\nD)}$ sample sets, each with a size $4\infd$.
Keeping one sample set requires $4\infd \ceil{\log(4\infd)}$ comparisons.
Finally, finding the best tuple among all filters and the pool requires additional $\nPl + \ceil{\log(\nD)}$ comparisons.
\end{proof}

\section{Proofs for Section \ref{se:robust}}
\label{se:robust:proofs}

The proof is similar to that of \cref{thm:num_of_S},
except that we need a new proof for the key \cref{lemma:filter},
since in the presence of ties, we may not be able to totally sort a sample $S$.
Instead, we show that a partially sorted set $S$ of a sufficient size can also be effective in pruning.

From now on, we treat the sample $S$ as a \emph{sequence} instead of a set, 
as a different arrival order of $S$ may result in a different filter by \cref{alg:robust}.

\begin{lemma}
\label{lemma:4t-robust}
Let $S \subseteq \X$ be a sequence of length $16\infd \Ccmpmax$.
Let $\Gs$ be the groups constructed by \cref{alg:robust}.
Under \cref{asm:dirm}, we have
\[
	|\{ \vecx \in S: \Gs - \vecx \infersim \vecx \}| \ge \frac34 |S|,
\]
where $\infd = \Cinfd$, 
\Ccmpmax is the largest size of a pairwise \vecu-similar subset of \X, 
and $\Gs - \vecx$ are the groups with $\vecx$ removed from its group.
\end{lemma}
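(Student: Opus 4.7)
The plan is to generalise \cref{lemma:4t} from a sorted sequence of individual tuples to a sorted sequence of groups, where each group is a maximal clump of $\vecu$-similar tuples. The key structural fact is that under \cref{asm:dirm} and the definition of \Ccmpmax, each group produced by \cref{alg:robust} is a pairwise $\vecu$-similar subset of \X, hence has cardinality at most \Ccmpmax. Combined with $|S| = 16 \infd \Ccmpmax$, this gives immediately $|\Gs| \geq |S| / \Ccmpmax = 16\infd$.

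I would then pick a single representative per group and apply \cref{lemma:comb} iteratively to the resulting sorted sequence of $|\Gs| \geq 16\infd \geq 4\infd$ representatives, mimicking the proof of \cref{lemma:4t}. Each application identifies one representative that can be written as $\vecx_1 + \sum \alpha_j (\vecx_{j+1} - \vecx_j) + \vece$ with $\|\vece\|_2 \leq \Creg$ and $\alpha_j \in \{0,1,2\}$, using only the surviving representatives. Repeating until only \infd{} representatives remain prunes $|\Gs| - \infd \geq 15\infd$ entire groups in the representative sense.

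The crux is lifting pruning of a representative to pruning of every tuple in its group. For any $\vecx'$ sharing a group with a pruned representative $\vecx$, substituting $\vecx'$ for $\vecx$ in the combination above perturbs the residual by at most $\|\vecx - \vecx'\|_2$; because $\vecx$ and $\vecx'$ belong to the same $\vecu$-similar group, \cref{asm:dirm} bounds this perturbation, so the resulting residual is absorbed into the slackened tolerance of $\infersim$. Hence every tuple inside a pruned group satisfies $\Gs - \vecx' \infersim \vecx'$, where the group structure on the right still contains the representative used in the combination.

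Finally, I would count: the unpruned tuples all live in at most \infd{} groups, so they number at most $\infd \cdot \Ccmpmax = |S|/16$, and the pruned fraction is at least $15/16 \geq 3/4$. The main obstacle I anticipate is the lifting step; specifically, showing that the perturbation in the residual when exchanging $\vecx$ for a group-mate $\vecx'$ remains controlled by a quantity the $\infersim$ relation is designed to tolerate. This is the only place where the geometry of ties enters the argument, and it is what ultimately forces the factor $\Ccmpmax$ in the required sample size. A secondary technicality is that Lemma~\ref{lemma:comb} is stated for a sorted sequence, so one has to check that \cref{alg:robust}'s grouping induces a well-defined ordering of representatives compatible with its hypotheses.
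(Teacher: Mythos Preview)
Your proposal has two genuine gaps, plus a minor miscount.

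\textbf{The lifting step is unsound.} You argue that replacing a pruned representative $\vecx$ by a group-mate $\vecx'$ perturbs the residual by at most $\norm{\vecx - \vecx'}_2$, and that \cref{asm:dirm} bounds this quantity. But \cref{asm:dirm}, via \cref{def:cmp}, bounds only the \emph{utility} gap $|\vecu^T(\vecx - \vecx')| \le \Ccmp$, not the Euclidean distance. Two $\vecu$-similar tuples in the unit ball can have $\norm{\vecx - \vecx'}_2$ as large as~$2$, so the perturbed residual is not controlled by anything that $\infersim$ tolerates. The paper sidesteps lifting entirely: it removes tuples one at a time, re-selecting a fresh tuple from each surviving group before the next application of \cref{lemma:comb}. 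Iteration continues until fewer than $\infd$ groups remain in each half, leaving at most $4\infd\Ccmpmax$ unpruned tuples, which is the $1/4$ fraction you need.

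\textbf{The odd/even split is not optional.} Applying \cref{lemma:comb} to the full list of representatives yields a combination with differences $\vecx_{j+1} - \vecx_j$ between \emph{consecutive} representatives, i.e., between a tuple in $G_{j+1}$ and one in $G_j$. The relation $\infersim$ in \cref{eq:filter-robust} only admits differences between $G_{j+2}$ and $G_j$; consecutive-group differences are deliberately excluded because a member of $G_{j+1}$ can have higher utility than a member of $G_j$. The paper therefore partitions $\Gs$ into odd- and even-indexed groups and works within each half: there, consecutive list elements come from original groups two indices apart, so the output of \cref{lemma:comb} conforms to \cref{eq:filter-robust} and the apex lands in $G_1 \cup G_2$ as required.

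\textbf{Group size.} A group $G_\vecy$ is not pairwise $\vecu$-similar; its members are each similar to the representative $\vecy$ but not necessarily to one another. The correct bound is $|G_\vecy| \le 2\Ccmpmax - 1$, giving $|\Gs| \ge 8\infd$ rather than $16\infd$. This factor of two is harmless for the final count but is precisely why the constant in the sample size is $16$ rather than $8$.
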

\ifapx
\begin{proof}\deferredproof\end{proof}
\begin{appendixproof}[Proof of \cref{lemma:4t-robust}]
\else
\begin{appendixproof}
\fi
Note that by the definition of \Ccmpmax, 
for any particular tuple $\vecx \in S$,
there are at most $2(\Ccmpmax-1)$ tuples that are \vecu-similar with tuple \vecx.
Thus, $\Gs$ must contain at least $8\infd$ groups, and
we split all groups in $\Gs$ into two parts, those with an odd index and those with an even index.

In each part, we can extract a totally sorted list $L$ of size at least \infd, by picking exactly one tuple from each group.
We remove one tuple $\vecw \in L$ from $S$ such that $L - \vecw \infer \vecw$, whose existence is guaranteed by \cref{lemma:comb}.
\cref{eq:filter-robust} guarantees that $\Gs - \vecx \infersim \vecx$.

We repeatedly do so until less than \infd groups remain in each part, 
which means that the number of remaining tuples is at most $2 \infd \Ccmpmax$ in each part.
As a result, we are able to remove at least $16\infd \Ccmpmax - 4 \infd \Ccmpmax$ tuples, 
concluding the claim.
\end{appendixproof}

Although the above lemma appears similar to \cref{lemma:4t}, 
a crucial difference is that the set of prunable tuples in $S$ now depends on the arrival order of~$S$,
which causes non-trivial technical challenges in the analysis.
A critical observation that enables our analysis is the following result.
\begin{lemma}
\label{lemma:bounded-bad-representatives}
Fix a sequence $S$ of size $16\infd \Ccmpmax$, there exist at least $\frac14 |S|$ tuples \vecz in $S$ that satisfy
\[
	S - \vecz \infersim \vecz.
\]
\end{lemma}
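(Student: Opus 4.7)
The plan is to leverage \cref{lemma:4t-robust} and then carefully account for how the group structure depends on arrival order. By \cref{lemma:4t-robust}, there is a set $P \subseteq S$ of size at least $\tfrac34|S|$ such that $\Gs - \vecx \infersim \vecx$ for every $\vecx \in P$, where $\Gs$ is the grouping produced by \cref{alg:robust} on the fixed sequence $S$. The gap between this conclusion and the target $S - \vecz \infersim \vecz$ is that the latter relation is witnessed by the algorithm running on the sequence $S - \vecz$, and this run may produce a different grouping from $\Gs - \vecz$ whenever $\vecz$ was serving as a group representative in the original run on $S$.

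I would next isolate precisely these problematic tuples. Let $R$ be the set of representatives, i.e., first-arriving members under $S$, of those groups of $\Gs$ that contain at least two tuples. Since each such group contributes at least two members, a direct count gives $|R| \le |S|/2$. For any $\vecz \in S \setminus R$, the tuple $\vecz$ is either a singleton in its own group or a non-representative in a larger group. In both cases, deleting $\vecz$ from the stream $S$ preserves the relative arrival order of the remaining tuples and leaves every representative used by $\Gs$ intact; consequently, the grouping that \cref{alg:robust} produces on $S - \vecz$ coincides exactly with $\Gs - \vecz$.

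Combining the two observations, for every $\vecz \in P \setminus R$ the hypothesis $\Gs - \vecz \infersim \vecz$ transfers directly to $S - \vecz \infersim \vecz$. A final counting step,
\[
  |P \setminus R| \ge |P| - |R| \ge \tfrac{3}{4}|S| - \tfrac{1}{2}|S| = \tfrac{1}{4}|S|,
\]
yields the claimed fraction.

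The main technical obstacle, I expect, is the second step: verifying cleanly that removing a non-representative (or singleton) tuple from the input sequence leaves \cref{alg:robust}'s grouping otherwise unchanged. This rests on the algorithmic invariant that each arriving tuple's group assignment is determined solely by comparisons against the already-existing representatives, so deleting a tuple that is never used as a basis for any such comparison cannot affect later decisions. Provided this invariant holds, the $\tfrac{3}{4}$ from \cref{lemma:4t-robust} loses at most the $\tfrac{1}{2}$ accounted for by $R$, producing exactly the $\tfrac{1}{4}$ in the statement.
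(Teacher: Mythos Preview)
Your proposal is correct and follows essentially the same approach as the paper: apply \cref{lemma:4t-robust} to get the $\tfrac34|S|$ set, observe that representatives of non-singleton groups number at most $|S|/2$, and note that removing any other tuple leaves the grouping produced by \cref{alg:robust} unchanged so that $\Gs-\vecz$ coincides with the grouping on $S-\vecz$. The paper's ``good'' tuples are exactly your $S\setminus R$, and both proofs conclude with the same $\tfrac34-\tfrac12=\tfrac14$ subtraction.
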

\ifapx
\begin{proof}\deferredproof\end{proof}
\begin{appendixproof}[Proof of \cref{lemma:bounded-bad-representatives}]
\else
\begin{appendixproof}
\fi
Let $\Gs$ be the groups constructed by \cref{alg:robust}.
Write
\[
	S' = \{ \vecx \in S: \Gs - \vecx \infersim \vecx \}.
\]
By \cref{lemma:4t-robust}, we know that $|S'| \ge \frac34 |S|$.
For an arbitrary tuple $\vecz \in S'$, suppose \vecz is assigned to a group $G \in \Gs$.
We call a tuple \vecz \emph{good} if $|G| = 1$ or \vecz is not a representative in $R$ in \cref{alg:robust}.
Let $\Gs'$ be the groups constructed by \cref{alg:robust} using $S - \vecz$. If $\vecz$ is good,
then $\Gs' = \Gs - \vecz$.
Therefore, for a good tuple \vecz we always have 
\[
	S - \vecz \infersim \vecz.
\]
By definition, it is easy to see that there are at most $|S|/2$ tuples in $S$ that are not good,
proving the lemma.
\end{appendixproof}

Denote by $\filtersim(S)$ the set of tuples that can be pruned by $S$, that is,
\begin{align}
	\filtersim(S) = \{ \vecx \in \reals^d: S \infersim \vecx \}.
\end{align}
We now prove a similar lemma to \cref{lemma:filter} by a generalized symmetrization argument over sequences.
\begin{lemma}
\label{lemma:filter-robust}
Given a set of tuples \X, and
a random sequence $S$
of at least $16\infd \Ccmpmax$
tuples from $\X$,
we have
\[
	\expect \left[ |\filtersim(S) \cap \X| \right] 
	\ge \frac14 |\X|,
\]
where $\infd = \Cinfd$, 
and \Ccmpmax is the largest size of a pairwise \vecu-similar subset of \X. 
Moreover, the expectation is taken over $S$.
\end{lemma}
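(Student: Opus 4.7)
The plan is to mirror the symmetrisation argument of the proof of \cref{lemma:filter}, with \cref{lemma:bounded-bad-representatives} playing the role previously played by \cref{lemma:4t}, and with extra care because $\infersim$ now depends on the arrival order of $S$. Let $\vecx$ be uniformly random in $\X$, let $\vecy$ be the last tuple of $S$, and set $T = S - \vecy$, the prefix of length $|S| - 1$. The same chain of inequalities as in \cref{lemma:filter} goes through step by step: appending $\vecy$ to $T$ can only attach $\vecy$ to the last existing group or open a new singleton group, so $T \infersim \vecx$ implies $S \infersim \vecx$, giving $\Pr[S \infersim \vecx \mid S] \ge \Pr[T \infersim \vecx \mid S]$; marginalising over $\vecy$ reduces the conditioning to $T$; and trivially $T \subseteq \filtersim(T)$, so the same counting calculation yields $\Pr[T \infersim \vecx \mid T] \ge \Pr[T \infersim \vecy \mid T]$, where now $\vecy$ is uniform on $\X \setminus T$. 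Folding $\vecy$ back into $S$, the task reduces to proving
\[
	\expect_S\!\big[\indicator[S - \vecy \infersim \vecy]\big] \ge \frac{1}{4}.
\]

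The hard part is this last inequality, because the set $\{\vecz \in S : S - \vecz \infersim \vecz\}$ now depends on the ordering of $S$, so one cannot simply condition on $S$ as a multiset and invoke a set-based bound as in \cref{lemma:filter}. To get around this, I would exploit the fact that deleting the tuple at any position $i$ of a uniformly random ordering $S$ leaves a uniformly random ordering of the remaining multiset: conditional on $S[i] = \vecy$, the sequence $S - \vecy$ has the same distribution as a uniform ordering of the remaining tuples, independent of $i$. Hence $\Pr[S - S[i] \infersim S[i] \mid S \text{ as multiset}]$ is the same for every position $i$, and by linearity of expectation,
\[
	\expect_S\!\big[\indicator[S - \vecy \infersim \vecy]\big]
	= \expect_S\!\left[\frac{|\{\vecz \in S : S - \vecz \infersim \vecz\}|}{|S|}\right] \ge \frac{1}{4},
\]
where the last inequality is exactly \cref{lemma:bounded-bad-representatives}.

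The main obstacle is the exchangeability argument just above; once one observes that the filter of $S - \vecy$ does not depend on the position from which $\vecy$ was removed, the remaining steps are an essentially verbatim adaptation of the proof of \cref{lemma:filter}, except that the $3/4$ lower bound is weakened to $1/4$. A minor sanity check needed along the way is that $T \subseteq \filtersim(T)$, which is immediate since each tuple lies in its own group and therefore prunes itself under \cref{alg:robust}.
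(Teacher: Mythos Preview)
Your overall approach is essentially identical to the paper's: the same symmetrisation chain reducing to $\expect_S[\indicator[S-\vecy\infersim\vecy]]$, followed by the same exchangeability step (the paper phrases it as ``double counting, as every sequence $S$ appears $|S|$ times'') to replace the last position by a uniformly random one, and then \cref{lemma:bounded-bad-representatives} applied pointwise.

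One justification you give is factually wrong. It is not true that ``appending $\vecy$ to $T$ can only attach $\vecy$ to the last existing group or open a new singleton group.'' In \cref{alg:robust} the new tuple is sorted into $R$ by binary search, so $\vecy$ may tie with \emph{any} representative (and join that representative's group) or be inserted as a new representative at \emph{any} position in the sorted list $R$. When $\vecy$ lands as an interior representative the group indices shift, the $(G_j,G_{j+2})$ pairings in \cref{eq:filter-robust} for $S$ are not a superset of those for $T$, and even $G_1\cup G_2$ can change; so the monotonicity $T\infersim\vecx\Rightarrow S\infersim\vecx$ is not the triviality you describe. The paper uses this same inequality without comment, so your argument tracks the paper step for step --- only the particular reason you supply for this one inequality does not hold.
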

\ifapx
\begin{proof}\deferredproof\end{proof}
\begin{appendixproof}[Proof of \cref{lemma:filter-robust}]
\else
\begin{appendixproof}
\fi
Let $\vecy$ be the last tuple added into $S$. Write $T = S - \vecy$.
Given $T$, the distribution of $\vecy$ is a uniform distribution from $\X \setminus T$.
Let $\vecx$ be a random sample from $X$. 
Since $T \subseteq \filtersim(T)$, we have
\[
	\Pr [  T \infersim \vecx \mid T ] \geq
	\Pr [  T \infersim \vecy \mid T ].
\]
Then,
\begin{align*}
    \expect_S \left[ \frac{|\filtersim(S) \cap \X|}{|\X|} \right] 
    &= \expect_{S} [ \Pr [ S \infersim \vecx \mid S ] ] \\
    &= \expect_{S} [ \Pr [ T + \vecy \infersim \vecx \mid S ] ] \\
    &\ge \expect_{S} [ \Pr [ T \infersim \vecx \mid S ] ] \\
    &= \expect_{T} [ \Pr [ T \infersim \vecx \mid T ] ] \\
    &\ge \expect_{T} [ \Pr [ T \infersim \vecy \mid T ] ] \\
    &= \expect_{S} [ \indicator [ S - \vecy \infersim \vecy \mid S ] ].
\end{align*}

Fix $S$, let $\vecz \in S$ be a uniformly random tuple in $S$, and we have
\begin{align*}
	\expect_{S} [ \indicator [ S - \vecy \infersim \vecy \mid S ] ]
	&= \expect_{S} [ \Pr [ S - \vecz \infersim \vecz \mid S ] ] \\
	&\ge 1/4,
\end{align*}
where the last step is by \cref{lemma:bounded-bad-representatives}, and the first step is due to double counting,
as every sequence $S$ appears $|S|$ times in the right-hand side,
completing the proof.
\end{appendixproof}

\ifapx
\begin{proof}[Proof of \cref{thm:num_of_S-robust}]
The proof is similar to \cref{thm:num_of_S} on a high level, and is deferred to Appendix.
\end{proof}
\begin{appendixproof}[Proof of \cref{thm:num_of_S-robust}]
\else
\begin{appendixproof}
\fi
The proof is similar to \cref{thm:num_of_S} on a high level.
We only elaborate on their differences.

We first prove the guarantee on the regret.
If the optimal tuple $\vecx^*$ is in the pool once the algorithm is done, then the regret is at most $\Ccmp$.
If $\vecx^*$ is not in the pool, then
the proof of \cref{thm:anytime} shows that there is $\vecx$ in one of the sample, say $S$,
that yields a regret of $\Creg/c$.
The top representative of that sample yields $\Creg/c + \Ccmp$ regret. Finally, the final top tuple
yields $\Creg/c + 2\Ccmp$ regret.

Next, we upper bound the size of every sample and the number of samples similarly to
the proof of \cref{thm:num_of_S}.
We require every sample to prune at least $1/8$ fraction of the remaining tuples instead of $1/2$,
which leads to a demand for $\ceil{\log_{8/7} (\nD)}$ samples.
The total failure probability is bounded by $2\nD e^{-2 \nPl /16^2} \leq 1/\nD$.
Consequently, with probability at least $1-1/\nD$,
we will use at most
$\ceil{\log_{8/7} (\nD)}$ sample sets, each with a size $16\infd \Ccmpmax$, at most.
\note[more details]{
For any sample $S$ with at least $16\infd \Ccmpmax$ samples and any subset $\X \subseteq \D$, 
let $\X' = \filtersim(S) \cap \X$ and by \cref{lemma:filter-robust} we have
$\expect [ |\X'| ] \ge \frac14 |\X|$.
In particular, let \Pl is a random subset of \X and we have 
$\expect[|\Pl'|] \ge \frac14 |\Pl|$.
Then, 
\begin{align*}
	\Pr\left[ |\Pl'| < \frac3{16} |\Pl| \right] 
	&= \Pr\left[ |\Pl'| < \frac14 |\Pl| - \frac1{16} |\Pl|\right]  \\
	&\le \Pr\left[ |\Pl'| < \expect[|\Pl'| - \frac1{16} \nPl] \right] \\
	&\le e^{-2 \nPl /16^2},
\end{align*}
where the last step invokes \cref{lemma:tail}.
Since there can be at most \nD samples,
the probability that any sample fails to pass the pool test is upper bounded by $\nD e^{-2 \nPl /16^2}$.

We continue to upper bound the number of sample sets.
At most $\ceil{\log_{8/7}(\nD)}$ sample sets suffice if every sample can prune at least 1/8 fraction of the remaining tuples.
Fix an arbitrary sample $S$, and let \X to be the set of remaining tuples.
The pool \Pl is a random sample from \X of size \nPl. Thus,
$\expect[|\Pl'|] / \nPl = |\X'| / |\X|$. Consequently, if  $|\X'| < |\X|/8$, then $\expect[|\Pl'|] < \nPl/8$
and
\begin{align*}
	\Pr\left[ |\Pl'| \ge \frac3{16} |\Pl| \right] 
	\le \Pr\left[ |\Pl'| \ge \expect[|\Pl'|] + \frac1{16} \nPl \right] 
	\le e^{-2 \nPl /16^2}.
\end{align*}
Similar to the above, 
the probability that any bad sample passes the test is upper bounded by $\nD e^{-2 \nPl /16^2}$.
}

Building one filter requires at most $\bigO(16\infd \Ccmpmax \log(16\infd \Ccmpmax))$ comparisons, because
sorting an new tuple \vecx within $R$ by binary search costs at most $\bigO(\log(16\infd \Ccmpmax))$ comparisons.
Finally, finding the best tuple among all filters and the pool requires additional $\nPl + \ceil{\log_{8/7} (\nD)}$ comparisons.
\end{appendixproof}

\section{Improving baseline filters}
\label{se:extensions}

In this section, 
we improve existing filters by \citet{xie2019strongly},
by using linear and quadratic programs.
Previously, their filters rely on explicit computation of convex hulls,
which is feasible only in very low dimension.
For example, the convex hull size, and consequently
the running time of these existing techniques, 
have an exponential dependence on $d$~\citep{barber1996quickhull}. 

\subsection{Improving \emph{constrained utility space} filter}
\label{sse:utility}

One of the most natural strategies is to iteratively compare a pair of random tuples.
The feasible space for the utility vector \vecu is constrained by the list of pairs 
$A = \{a_i\}$ that have been compared,
where $a_i = (\vecy,\vecz)$ such that $\util(\vecy) < \util(\vecz)$.
Note that every pair of tuples $\vecy,\vecz \in \D$ forms a halfspace in $\reals^d$, i.e.,
$\h = \{ \vecu \in \reals^d: \vecu^T(\vecy-\vecz) < 0 \}$.
Specifically, the unknown $\vecu \in \sphere$ is contained in the intersection $U$ of a set of halfspaces, one by each pair.

\citet[Lemma 5.3]{xie2019strongly} propose to prune a tuple \vecx if 
for \emph{every} possible $\vecu \in U$ there exists a tuple \vecw in some pair of $A$ such that $\util(\vecw) \ge \util(\vecx)$.
They first compute all extreme points of $U$, and then check if the condition holds for every extreme point.
However, this approach is highly inefficient, 
as potentially there is an exponential number of extreme points.

Instead, we propose to test the pruning condition by asking to find 
a vector $\vecu$ that satisfies
\begin{equation}
	\eqlp
\end{equation}

If there is no such vector $\vecu$ we prune $\vecx$.
This test can be done with a linear program (LP).
Note that the test is stronger than that by \citet{xie2019strongly} as it has been extended to handle $\epsilon$-regret.

We claim that a given tuple \vecx can be safely pruned if there is no vector~$\vecu$ satisfying LP~(\cref{eq:lp}).

\proplp*

\ifapx
\begin{proof}\deferredproof\end{proof}
\begin{appendixproof}[Proof of \cref{prop:lp}]
\else
\begin{appendixproof}
\fi
Let $\vecu$ be the utility vector.
The assumptions imply 
\begin{equation}
\label{eq:cond1}
	\vecu^T\vecx - \vecu^T \vecz > 0
	\quad\text{and}\quad
	\vecu^T((1-\Creg)\vecx - \vecz) >  0.
\end{equation}

Next, note that, by definition, for every $(\vecy,\vecz) \in A$,
\begin{equation}
\label{eq:cond2}
	\vecu^T(\vecz-\vecy) > 0.
\end{equation}

The inequalities in Eqs.~(\ref{eq:cond1})--(\ref{eq:cond2}) are all proper.
Consequently, we can scale $\vecu$ so that the left-hand sides in Eqs.~(\ref{eq:cond1})--(\ref{eq:cond2}) are at least 1,
that is, there exists a solution to LP~(\cref{eq:lp}).
\end{appendixproof}

Notice that the second set of constraints in LP~(\cref{eq:lp}) (i.e., $\vecu^T(\vecx-\vecz) \ge 1$) is redundant provided $\util(\vecx) \geq 0$.
Actually, even if $\util(\vecx) < 0$, the test only lets in \vecx that is slightly worse than the best tuple in $A$, 
which is unlikely since $\util(\vecx) < 0$.
Thus, in practice we recommend to omit the second set of constraints to speed up the test.

\note[sorted list vs. pairs]{
It is not true that 
sorting a list of tuples is a much more efficient way to generate compared pairs.
At least for the purpose of maintaining utility space.
There are quadratic number of pairs in a list, but the ``effective'' pairs are those made by actual comparisons.
}

A filter for maintaining the constrained utility space is conceptually different from the filter proposed in \cref{se:single}.
A small utility space of \vecu is the key for such a filter to be effective,
while a filter in \cref{se:single} maintains no explicit knowledge about \vecu and mainly relies on the geometry of the tuples.

\subsection{Improving \emph{conical hull of pairs} filter}
\label{sse:cone-by-pairs}

Another pruning strategy proposed by \citet[Lemma 5.6]{xie2019strongly}
is the following. 
Consider again a list of compared pairs $A = \{a_i\}$,
where $a_i = (\vecy,\vecz)$ such that $\util(\vecy) < \util(\vecz)$,
and consider a cone formed by all pairs in $A$.
A tuple \vecx can now be pruned
if there is another tuple \vecw kept by the algorithm,
such that
\[
	\vecx = \vecw + \sum_{a_i = (\vecy,\vecz) \in A} \beta_i (\vecy-\vecz)
	\quad\text{such that}\quad
	\beta_i \ge 0
	\quad\text{for all } i.
\]

Instead of actually constructing all facets of the conical hull, 
as done by \citet{xie2019strongly},
we propose to solve the following quadratic program (QP),
\begin{align*}
	\eqqppairs.
\end{align*}

If the optimal value of the QP is at most $\Creg$, we prune \vecx.

\propqppairs*
\ifapx
\begin{proof}\deferredproof\end{proof}
\begin{appendixproof}[Proof of \cref{prop:qp-pairs}]
\else
\begin{appendixproof}
\fi
We only discuss the case $\Creg = 0$.
When $\Creg > 0$, for any pruned tuple, there exists a tuple in some pair of $A$ that is at most a distance of \Creg away from it, and
thus $A$ maintains at least one $\Creg / c$-regret tuple.

The first sum in QP (\cref{eq:qp-pairs}) can be seen as an aggregated tuple by convex combination,
whose utility is no better than the top tuple in~$A$.
The second term only further decreases the utility of the first term.
Thus, if a tuple \vecx can be written as a sum of the first and second terms, 
its utility is no better than the top tuple in~$A$, and 
can be pruned.
\end{appendixproof}

Similar to \cref{eq:filter:lp}, a weaker but computationally more efficient filter can be used, by replacing the QP with an LP solver.
That is, we prune tuple \vecx if there is a solution to
\begin{align}
	\label{eq:lp-pairs}
	 \vecx = &\sum_{a_i = (\vecy,\vecz) \in A} \nu_{i} \, \vecz + \sum_{a_i = (\vecy,\vecz) \in A} \beta_i (\vecy-\vecz) \\
	 \text{such that}
	&\sum_{a_i = (\vecy,\vecz) \in A} \nu_{i} = 1 , 
	 \quad\text{and}\quad
	\nu_{i1}, \nu_{i2}, \beta_i \ge 0 \quad\text{for all } i. \nonumber
\end{align}

As a final remark about the above QP,
we compare its pruning power with that of the proposed filter (\cref{eq:filter}) in \cref{se:single}.
Obviously, its pruning power increases as the number of compared pairs in $A$ increases.
For a fixed integer $s$, 
a number of $s$ comparisons result in $s$ pairs for the above QP,
while in \cref{se:single},
$s$ comparisons can produce a sorted list of $s/\log(s)$ tuples and ${s/\log(s) \choose 2}$ pairs.
Hence, the above QP is less ``comparison-efficient'' than the one in \cref{se:single}.
Also, for a fixed number of compared pairs, the number of parameters is larger in QP~(\cref{eq:qp-pairs}) than in the proposed filter, 
which means that QP is more inefficient to solve.
These drawbacks are verified in our empirical study in the next section.

\section{Additional experiments}
\label{se:experiment:supp}

\smallskip
\para{Datasets.}
A summary of the real-life datasets we use for our evaluation can be found in \cref{tbl:datasets}.
The datasets contain a number of tuples up to 1M and a dimension up to 100.
Previous studies are mostly restricted to a smaller data size and a dimension size less than 10, and
a skyline operator is used to further reduce the data size in advance~\citep{qian2015learning,xie2019strongly,wang2021interactive}.
Note that running a skyline operator itself is already a time-consuming operation, especially for high-dimension data~\citep{borzsony2001skyline}, and 
becomes even more difficult to apply with limited memory size in the streaming setting.
Besides, a fundamental assumption made by a skyline operator, namely, 
pre-defined preference of all attributes, 
does not hold in our setting. 
According to this assumption, it is required to know beforehand whether an attribute is better with a larger or smaller value. 
This corresponds to knowing beforehand whether utility entry $\vecu_i$ is positive or negative for the i-th attribute. 
As we mentioned in \cref{se:def}, we do not make such an assumption about \vecu, and allow an arbitrary direction. 
This is reasonable, as preference towards some attributes may be diverse among different people. 
One example is the floor level in the housing market, where some may prefer a lower level, while others prefer higher. 
Hence, we do not pre-process the data with a skyline operator.

Details on the data generation process and the actual synthesized data 
can be found in our public Github repository.

\smallskip
\para{Baselines.}
We do not consider methods that synthesize fake tuples in pairwise comparisons, such as \citet{nanongkai2012interactive}.
Over a random-order stream, the algorithm by \citet{jamieson2011active} is the same as the baseline \hplp when adapted to find the top tuple instead of a full ranking.
The UH-Simplex method~\citep{xie2019strongly} that simulates the simplex method by pairwise comparisons is not included,
as it is mainly of theoretical interest, designed for offline computation, and has been shown to have inferior empirical performance compared to other baselines.
We do not consider baselines that iteratively compare a greedy pair (among all ${\nD \choose 2}$ pairs) with respect to some measure of interest,
such as \citet{qian2015learning,wang2021interactive},
because they are designed for offline computation and it is computationally prohibited to decide even the first greedy pair for the adopted datasets.

\smallskip
\para{Misc.}
We adopt the OSQP solver \citep{osqp} and the HIGHS LP solver~\citep{huangfu2018parallelizing}.
The maximum number of iterations for the solvers is set to 4000,
which is the default value in the OSQP solver.
All experiments were carried out on a server equipped with 24 processors of AMD Opteron(tm)
Processor 6172 (2.1 GHz), 62GB RAM, running Linux~2.6.\-32-754.35.1.el6.x86\_64.
The methods are implemented in Python~3.8.5.

\begin{figure}[!ht]
	\captionsetup[subfigure]{justification=justified,singlelinecheck=false,skip=-17pt}
    \centering
    \subcaptionbox{\label{fig:d-sz}}{
    	\pgfinput{prune-d-sz}
    }
    \subcaptionbox{\label{fig:tol-sz}}{
    	\pgfinput[trim={0 0 0 0.8cm},clip]{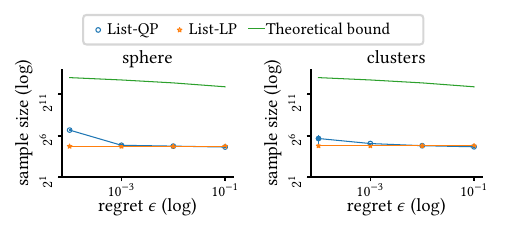}
    }
    \vspace{-0.7cm}
    \caption{Sample size required to prune half of tuples, 
    as a function of the data dimension (a), and 
    as a function of the regret parameter (b)}
    \label{fig:samplesize}
\end{figure}

\begin{figure}[!ht]
	\captionsetup[subfigure]{justification=justified,singlelinecheck=false,skip=-17pt}
	\centering
	\subcaptionbox{\label{fig:param:frac}}{
		\pgfinput[trim={0 0 0 0.1cm},clip]{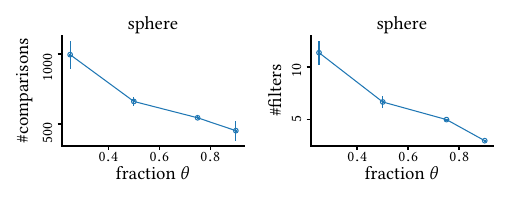}
	}\par\vspace{-0.5cm}
	\subcaptionbox{\label{fig:param:nPl}}{
		\pgfinput{param-nq-sth}
	}
	\vspace{-0.8cm}
	\caption{Effect of parameters on algorithm \listqp}
	\label{fig:param}
\end{figure}

\subsection{Effect of parameters}
\label{sse:ex:param}

Recall that in \cref{alg:framework}, a pool \Pl of \nPl tuples is used to test the performance of a new filter.
A new filter will be ready when it can prune at least a \CPl fraction of tuples in \Pl.
In \cref{fig:param}, we run \cref{alg:framework} with a \listqp filter on a dataset of 10k tuples.
We fix one parameter ($\nPl=100$ or $\CPl=0.5$) and vary the other.

Parameter \CPl roughly specifies the expected fraction of tuples a filter should be able to prune.
A larger \CPl implies a need for fewer filters but a larger sample size for each filter.
It is beneficial to use a large \CPl which leads a smaller number of comparisons overall.
Nevertheless, as we will see shortly, such a large filter can be time-consuming to run, especially when the dimension $d$ is large.

A larger value of \nPl improves the reliability of the testbed \Pl,
which helps reducing the number of comparisons.
However, a larger \nPl also results in longer time to run filters over the testbed \Pl.

	\end{toappendix}
\fi
\fi 

\section{Introduction}
\label{se:intro}

\begin{table*}
\begin{center}
\caption{\label{tbl:cmp}Comparison with existing algorithms.
We assume worst-case input with respect to the user preference and 
the distribution of the \nD tuples in the database $\D \subseteq \reals^d$,
but for the streaming case we assume that tuples arrive in random order.
An algorithm is strongly truthful if it does not use synthesized tuples that do not exist in the database \D in any comparison.
}\par\vspace{-0.3cm}
\begin{tabular}{lcccc}
\toprule
Algorithm	&\makecell{Worst-case query complexity}	&\makecell{Average-case query complexity}	&\makecell{Strongly truthful}	&Streaming	\\
\midrule
\citet{nanongkai2012interactive}	&$\bigO(d \log (d/\Creg))$	&-	&\xmark	&\xmark	\\
\citet{jamieson2011active}	&$\bigO(n)$	&$\bigO(d \log \nD)$	&\cmark	&\cmark	\\
\citet{xie2019strongly}	&$\bigO(n)$	&$\bigO(\nD^{1/d})$	&\cmark	&\xmark	\\
\cref{alg:framework,alg:single} in this paper	&$\bigO( d \log^2\!(d/\Creg) \log n)$	&-	&\cmark	&\cmark	\\
\bottomrule
\end{tabular}
\end{center}
\note{Our bound is stronger but messier $\bigO( d \log(d/\Creg) \log(d\log(d/\Creg)) \log n) = \bigO( d \log(d/\Creg) \log n \cdot [\log(d) + \log\log(d/\Creg)])$}
\end{table*}

One of the most fundamental tasks in data science is to assist 
a user with unknown preferences in finding high-utility tuples within a large database.
Such a task can be used, for example, 
for finding relevant papers in scientific literature, 
or recommending favorite movies to a user.
However, utility of tuples is highly personalized.
``\emph{One person's trash is another person's treasure,}'' as the saying goes.
Thus, a prerequisite to accomplishing this task is to efficiently and accurately 
elicit user preferences.

It has long been known, 
both from studies in psychology \citep{stewart2005absolute} 
as well as from personal experience, 
that humans are better at performing \emph{relative comparisons} 
than \emph{absolute assessments}.
For instance, 
it is typically easy for a user to select a favorite movie between two given movies,
while it is difficult to score the exact utility of a given movie.
This fact has been used in many applications,
such as classification \citep{haghiri2018comparison}, 
ranking \citep{liu2009learning},
and clustering~\citep{chatziafratis2021maximizing}.

In this paper we leverage the observation that humans are better at comparing
rather than scoring information items, 
and use relative comparisons to facilitate preference learning and help users find relevant tuples
in an interactive fashion, 
i.e., by adaptively asking users to compare pairs of tuples.
To cope with the issue of information overload, 
it is usually not necessary to identify all relevant tuples for a user.
Instead, if there exists a small set of high-utility tuples in the database,
a sensible goal is to identify at least one high-utility tuple 
by making a minimum number of comparisons.
In particular, assuming that a user acts as an \emph{oracle},
the number of requested comparisons, 
which measures the efficiency of preference learning, 
is known as \emph{query complexity}.

More specifically, in this paper we focus on the following setting. 
We consider a database \D consisting of \nD tuples, 
each represented as a point in $\reals^d$.
User preference is modeled by an unknown linear function 
on the numerical attributes of the data tuples.
Namely, we assume that a user is associated with an unknown utility 
vector~$\vecu \in \reals^d$, and
the utility of a tuple $\vecx \in \reals^d$ for that user is defined to~be 
\[
	\util(\vecx) = \vecu^T\vecx.
\]

A tuple \vecx is considered to be of high-utility 
if its utility is close to that of the best tuple,
or more precisely,
if compared to the best tuple its utility loss is bounded by 
an \Creg fraction of the best utility,
\[
	\util(\vecx^*) - \util(\vecx) \le \Creg \, \util(\vecx^*),
\]
where $\vecx^* = \arg\max_{\vecx \in \D} \util(\vecx)$ is the best tuple in \D.
We call the user-defined parameter \Creg the ``regret'' ratio,
a terminology used earlier in database literature \cite{nanongkai2012interactive}.
We demonstrate this setting with a concrete example below.

\begin{example}
Every tuple being a point in $\reals^3$ represents a computer with three attributes:
price, CPU speed, and hard disk capacity.
It is reasonable to assume that the utility of a computer grows linearly in, 
for example, the hard disk capacity.
Thus, a user may put a different weight on each attribute, as one entry in the utility vector $\vecu \in \reals^3$, 
which measures its relative importance.
\end{example}

For the setting described above
with a linear utility function, 
it is obvious that at most $\nD-1$ comparisons suffice to find the best tuple, 
by sequentially comparing the best tuple so far with a next tuple.
Surprisingly, despite the importance of this problem in many applications, 
improvement over the na\"ive sequential strategy, in the worst case, has remained elusive.
A positive result has only been obtained in a very restricted case of two attributes, 
i.e., a tuple is a point in $\reals^2$~\citep{xie2019strongly}.
Other existing improvements rely on strong assumptions~\citep{jamieson2011active,xie2019strongly},
for example, when every tuple is almost equally probable to be the best.
To the best of our knowledge, 
we are the first to offer an improvement on the query complexity 
that is logarithmic in \nD, in the worst case.
We refer the reader to \cref{tbl:cmp} for a detailed comparison with existing work.

There exist heuristics in the literature that are shown to perform empirically better
than the na\"ive sequential strategy, in terms of the number of requested comparisons.
For example, a popular idea is to compare a carefully-chosen pair in each round of interaction with the user~\citep{qian2015learning,wang2021interactive}.
However, these methods are computationally expensive, and require multiple passes over the whole set of tuples.
To illustrate this point, finding a ``good'' pair with respect to a given measure of interest can easily take $\bigO(\nD^2)$ time, as one has to go over ${\nD \choose 2}$ candidate pairs.
Furthermore, while such heuristics may work well in practice, 
they may require $\Omega(\nD)$ pairwise comparisons, in the worst case.

We also address the problem of finding a high-utility tuple \emph{reliably}, 
where we do not force a user to make a clear-cut decision 
when confronted with two tuples that have nearly equal utility for the user. 
In this way we can avoid error-prone decisions by a user.
Instead, we allow the user to simply declare a tie between the two tuples.
To our knowledge, this is the first paper that 
considers a scenario of finding a high-utility tuple with ties and
provides theoretical guarantees to such a scenario.

Our contributions in this paper are summarized as follows:
	($i$) We devise a \emph{single-pass} streaming algorithm 
	that processes each tuple only once, 
	and finds a high-utility tuple by making adaptive pairwise comparisons;
	($ii$) The proposed algorithm requires a memory size and has query complexity 
	that are both logarithmic in \nD, in the worst case, 
	where \nD is the number of all tuples;
	($iii$) We show how to maintain the theoretical guarantee of our method, 
	even if ties are allowed when comparing tuples with nearly equal utility;
	($iv$) We offer significant improvement to existing pruning techniques in the literature, 
	by leveraging powerful tools from mathematical programming;
	($v$) We systematically evaluate all proposed algorithms over synthetic and 
	real-life datasets, and demonstrate their superior performance compared to existing methods.

The rest of the paper is organized as follows.
We formally define the problem in \cref{se:def}.
We discuss related work in \cref{se:related}.
Then, we describe the proposed algorithm in \cref{se:single}, and 
its extension in \cref{se:robust} when ties are allowed in a comparison.
Enhancement to existing techniques follows in \cref{se:extensions-short}.
Empirical evaluation is conducted in \cref{se:experiment}, 
and we conclude in \cref{se:conclusion}.

\section{Problem definition}
\label{se:def}

In this section, we formally define the \emph{interactive regret minimization} (\irm) problem.

The goal of the \irm problem is to find a good tuple among all given tuples $\D \subseteq \reals^d$ in a database.
The goodness, or utility, of a tuple \vecx is determined by an unknown utility vector $\vecu \in \reals^d$ via the dot-product operation $\util(\vecx) = \vecu^T\vecx$.
However, we assume that we do not have the means to directly compute $\util(\vecx)$, 
for a given tuple \vecx.
Instead, we assume that we have access to 
an oracle that can make comparisons between pairs of tuples:
given two tuples \vecx and \vecy
the oracle will return the tuple with the higher utility. 
These assumptions are meant to model users
who cannot quantify the utility of a given tuple on an absolute scale, 
but can perform pairwise comparisons of tuples.

In practice, it is usually acceptable to find a \emph{sufficiently good} tuple $\vecx'$ in \D, instead of the top one $\vecx^*$.
The notion of ``sufficiently good'' is measured by the ratio in utility loss 
$\frac{\util(\vecx^*) - \util(\vecx')}{\util(\vecx^*)}$, 
which is called ``regret.''
This notion leads to the definition of the \irm~problem.
\begin{problem}[Interactive Regret Minimization (\irm)]
\label{problem:IRM}
Given a set of \nD tuples in a database $\D \subseteq \reals^d$,
an unknown utility vector $\vecu \in \reals^d$, and
a parameter $\Creg \in [0,1]$,
find an \Creg-regret tuple $\vecx'$, such~that
\[
	\util(\vecx^*) - \util(\vecx') \le \Creg \, \util(\vecx^*),
\]
where $\util(\vecx) = \vecu^T\vecx$ and $\vecx^* = \arg\max_{\vecx \in \D} \util(\vecx)$.
In addition we aim at performing the minimum number of pairwise comparisons.
\end{problem}

Problem~\ref{problem:IRM} is referred to as ``interactive'' due to the fact 
that a tuple needs to be found via interactive queries to the oracle.

The parameter \Creg measures the regret.
When $\Creg = 0$, the \irm problem requires to find the top tuple $\vecx^*$ with no regret.
We refer to this special case as \emph{interactive top tuple} (\itt) problem.
For example, when tuples are in 1-dimension, 
\itt reduces to finding the maximum (or minimum) among a list of distinct~numbers.

Clearly, the definition for the \irm problem is meaningful only when $\util(\vecx^*) \ge 0$,
which is an assumption made in this paper.
Another important aspect of the \irm problem is whether or not the oracle 
will return a \emph{tie} in any pairwise comparison. 
In this paper, we study both scenarios.
In the first scenario, we assume that the oracle never returns a tie,
which implies that no two tuples in \D have the same utility.
We state our assumptions for the first (and, in this paper, default) scenario below.
We discuss how to relax this assumption for the second scenario in \cref{se:robust}.
\begin{assumption}
\label{asm:irm}
No two tuples in \D have the same utility.
Moreover, the best tuple $\vecx^*$ has non-negative utility, i.e., $\util(\vecx^*) \ge 0$.
\end{assumption}

Without loss of generality, we further assume that $\norm{\vecu}_2 = 1$ and 
$\norm{\vecx}_2 \le 1$, for all $\vecx \in \D$, which can be easily achieved by scaling.
As a consequence of our assumptions, we have $c = \util(\vecx^*) \le 1$.
The proposed method in this paper essentially finds an $\Creg/c$-regret tuple,
which is feasible for the \irm problem when $c = 1$.
Our solution still makes sense, i.e., a relatively small regret $\Creg/c$, if $c$ is not too small or a non-trivial lower bound of $c$ can be estimated in advance.
On the other hand, if $c$ is very small, 
there exists no tuple in \D that can deliver satisfactory utility in the first place,
which means that searching for the top tuple itself is also less rewarding.
For simplicity of discussion, we assume that $c = 1$ throughout the paper.

For all problems we study in this paper, 
we focus on efficient algorithms under the following computational model.
\begin{definition}[One-pass data stream model]
\label{def:streaming}
An algorithm is a one-pass streaming algorithm if 
its input is presented as a (random-order) sequence of tuples and is examined by the algorithm in a single pass.
Moreover, the algorithm has access to limited memory, generally logarithmic in the input size. 
\end{definition}
This model is particularly useful in the face of large datasets.
It is strictly more challenging than the traditional offline model, 
where one is allowed to store all tuples and examine them with random access
or over multiple passes.
A random-order data stream is a natural assumption in many applications, 
and it is required for our theoretical analysis. 
In particular, this assumption will always be met in an offline model, where one can easily simulate a random stream of tuples.
Extending our results to streams with an arbitrary order of tuples
is a major open problem.

One last remark about the \irm problem is the \emph{intrinsic dimension} of the database \D.
Tuples in \D are explicitly represented by $d$ variables, one for each dimension, and 
$d$ is called the \emph{ambient dimension}.
The intrinsic dimension of \D is the number of variables that are needed in a minimal representation of \D.
More formally,
we say that \D has an intrinsic dimension of $d'$ if 
there exist $d'$ orthonormal vectors $\myvec{b}_1, \ldots, \myvec{b}_{d'} \in \reals^d$ such that
$d'$ is minimal and
every tuple $\vecx \in \D$ can be written as a linear combination of them.
It is common that the intrinsic dimension of realistic data is much smaller than its ambient dimension.
For example, images with thousands of pixels can be compressed into a low-dimensional representation with little loss.
The proposed method in this paper is able to adapt to the intrinsic dimension of~\D 
without constructing its minimal representation explicitly.

In the rest of this section,
we review existing hardness results for the \itt and \irm problems.

\smallskip
\para{Lower bounds.}
By an information-theoretical argument, one can show that $\Omega(\log \nD)$ comparisons are necessary for the \itt problem~\citep{kulkarni1993active}.
By letting $d=\nD$ and  
$\D = \{\vece_i\}$ for $i \in [d]$, where $\vece_i$ is a vector in the standard basis,
$\Omega(d)$ comparisons are necessary to solve the \itt problem,
as a comparison between any two dimensions reveals no information about the rest dimensions.

Therefore, one can expect a general lower bound for the \irm problem to somewhat depend on both $d$ and $\log \nD$.
Thanks to the tolerance of \Creg regret in utility, 
\iftrue
a refined lower bound $\Omega(d \log (1/\Creg))$ for the \irm problem is given by \citet[Theorem 9]{nanongkai2012interactive}.
\else
in theory, one could consider an \Creg-net of the database \D, instead of the entire \D,
whose size is only $\Omega((1/\Creg)^d)$ and does not depend on \nD.
Any tuple in \D is within a distance of \Creg from some tuple in the \Creg-net.
Indeed, a refined lower bound 
$\Omega(\log(1/\Creg)^d) = \Omega(d \log (1/\Creg))$ for the \irm problem is given by \citet[Theorem 9]{nanongkai2012interactive},
by lower bounding the cardinality of the \Creg-net of a $d$-sphere.
\fi

\section{Related work}
\label{se:related}

\para{Interactive regret minimization.}
A database system provides various operators that return a representative subset of tuples (i.e., points in $\reals^d$) to a user.
Traditional top-$k$ operators \citep{carey1997saying} return the top-$k$ tuples 
according to an explicitly specified scoring function.
In the absence of a user utility vector \vecu for a linear scoring function,
the skyline operators \citep{borzsony2001skyline} return a tuple if it has the potential to be the top tuple for at least one possible utility vector.
In the worst case, a skyline operator can return the entire dataset.
\citet{nanongkai2010regret} introduce a novel $k$-regret operator that achieves a balance between the previous two problem settings, 
by returning $k$ tuples such that the maximum regret over all possible utility vectors is~minimized.

\citet{nanongkai2012interactive} further minimize regret in an interactive fashion by making pairwise comparisons.
They prove an upper bound on the number of requested comparisons by using synthesized tuples for some comparisons.
In fact, their method learns approximately the underlying utility vector.
However, synthesized tuples are often not suitable for practical use.

\citet{jamieson2011active} deal with a more general task of finding a full ranking of \nD tuples.
By assuming that every possible ranking is equally probable, 
they show that $\bigO(d \log \nD)$ comparisons suffice to identify the full ranking in expectation.
Nevertheless, in the worst case, one cannot make such an assumption, and
their algorithm may require $\Omega(\nD^2)$ comparisons for identifying a full ranking or $\Omega(\nD)$ comparisons for identifying the top tuple.
Another similar problem assumes a distribution over the utility vector \vecu without access to the embedding of the underlying metric space~\citep{karbasi2012comparison}.
The problem of combinatorial nearest neighbor search is also related,
where one is to find the top tuple as the nearest neighbor of a given tuple \vecu without access to the embedding~\citep{haghiri2017comparison}.

\citet{xie2019strongly} observe that the \itt problem is equivalent to a special linear program, 
whose pivot step for the simplex method can be simulated by making a number of comparisons.
Thus, an immediate guarantee can be obtained by leveraging the fact that 
$\bigO(\nD^{1/d})$ pivot steps are needed in expectation for the simplex method \citep{borgwardt1982average}.
Here the expectation is taken over some distribution over \D.
Also in the special case when $d=2$, they develop an optimal binary search algorithm~\citep{xie2019strongly}. 
\citet{zheng2020sorting} suggest letting a user sort a set of displayed tuples in each round of interaction, 
but their approaches are similar to \citet{xie2019strongly}, and do not use a sorted list the way we do.

There are other attempts to the \itt problem that adaptively select a greedy pair of tuples with respect to some measure of interest.
\citet{qian2015learning} iteratively select a hyperplane (i.e., pair) whose normal vector is the most orthogonal to the current estimate of \vecu.
\citet{wang2021interactive} maintain disjoint regions of \vecu over $\reals^d$, one for each tuple, where a tuple is the best if \vecu is located within its region.
Then, they iteratively select a hyperplane that separates the remaining regions as evenly as possible.
However, these greedy strategies are highly computationally expensive, and do not have any theoretical guarantee.

Compared to aforementioned existing work,
our proposed algorithm makes minimal assumptions, is scalable, 
and enjoys the strongest worst-case guarantee.
It is worth mentioning that existing research often assumes that increasing any tuple attribute always improves utility,
by requiring $\D \subseteq \reals_+^d$ and $\vecu \in \reals_+^d$~\citep{nanongkai2012interactive,xie2019strongly,zheng2020sorting,wang2021interactive}.
We do not make such an assumption in this paper.

\smallskip
\para{Active learning.}
The \irm problem can be viewed as a special highly-imbalanced linear classification problem.
Consider a binary classification instance, where the top tuple is the only one with a positive label and the rest are all negative.
Such labeling is always realizable by a (non-homogeneous) linear hyperplane, e.g.,
$\h = \{ \vecx \in \reals^d: \vecu^T\vecx = \vecu^T\vecx^* - \Coff \}$ for any sufficiently small $\Coff \ge 0$.
Note that non-homogeneous \h can be replaced by a homogeneous one (i.e., without the offset term \Coff) by lifting the tuples into $\reals^{d+1}$.

Active learning aims to improve sample complexity that is required for learning a classifier by adaptive labeling.
Active learning with a traditional labeling oracle has been extensively studied.
The above imbalanced problem instance happens to be a difficult case for active learning with a labeling oracle \cite{dasgupta2005coarse}.
We refer the reader to \citet{hanneke2014theory} for a detailed treatment.

Active learning with additional access to pairwise comparisons has been studied by \citet{kane2017active,kane2018generalized}.
That is, one can use both labeling and comparison oracles.
Importantly, \citet{kane2017active} introduce a notion of ``inference dimension,'' with which they design an algorithm to effectively infer unknown labels.
However, due to technical conditions, the inference technique is only useful for classification in low dimension ($d \le 2$) or special instances.
As one of our main contributions, 
we are the first to show that the inference technique can be adapted for the \irm problem.

\smallskip
\para{Ranking with existing pairwise comparisons.}
A different problem setting, 
is to rank collection of tuples by aggregating a set of 
(possibly incomplete and conflicting) pairwise comparisons, 
instead of adaptively selecting which pair of tuples to compare. 
This problem has been extensively studied in the literature within different abstractions.
From a combinatorial perspective, it is known as the \emph{feedback arc-set} problem on tournaments, where the objective is to find a ranking by removing a minimum number of inconsistent comparisons~\citep{ailon2008aggregating}.
There also exist statistical approaches to find a consistent ranking, or the top-$k$ tuples, by estimating underlying preference scores~\citep{minka2018trueskill,negahban2012iterative,chen2015spectral}.
In machine learning, the problem is known as ``learning to rank'' with pairwise preferences~\citep{liu2009learning},
where the aim is to find practical ways to fit and evaluate a ranking.

\note[$X+Y$ sorting\footnote{\href{https://en.wikipedia.org/wiki/X\_\%2B\_Y\_sorting}{https://en.wikipedia.org/wiki/X\_\%2B\_Y\_sorting}}]{
The sorting approach we propose in this paper is also relevant to the $X+Y$ sorting problem~\cite{bremner2006necklaces}.
Given a sample set $S \subseteq \D$,
let $X = \{ \vecu^T \vecx : \vecx \in X\}$, $Y = -X$, and 
\[
	X+Y = \{ x+y : x \in X, y \in Y \}.
\]
Sorting $X+Y$ reveals all pairwise comparisons in $S$.
It remains open if one can do this faster than $\bigO(s^2 \log(s))$ where $s = |S|$.
}

\section{Finding a tuple: oracle with no ties}
\label{se:single}

In this section, we present our single-pass streaming algorithm for the \irm problem.
Our approach, presented in \cref{alg:framework,alg:single}, 
uses the concept of \emph{filters} to prune sub-optimal tuples without the need of further comparisons.
\cref{alg:framework} is a general framework for managing filters, 
while \cref{alg:single} specifies a specific filter we propose.
As we will see in \cref{se:experiment} the framework can also be used for other filters.

The filter we propose relies on a remarkable inference technique introduced by \citet{kane2017active,kane2018generalized}.
Note that the technique was originally developed for active learning in a classification task, and
its usage is restricted to low dimension ($d \le 2$) or special instances under technical conditions.
We adapt this technique to devise a provably effective filter for the \irm problem.
In addition, we strengthen their technique with a high-probability guarantee and a generalized symmetrization argument.

The core idea 
is to construct a filter from a small random sample of tuples.
It can be shown that the filter is able to identify 
a large fraction of sub-optimal tuples in \D without further comparisons.
Fixing a specific type of filter with the above property,
\cref{alg:framework} iteratively constructs a new filter in a boosting fashion 
to handle the remaining tuples.
Finally, one can show that, with high probability, 
at most $\bigO(\log \nD)$ such filters will be needed.

\begin{algorithm}[t]
\DontPrintSemicolon
\SetKwFunction{FNewFilter}{NewFilter}
\KwIn{tuples \D and parameters \nPl, \CPl}
$F \gets \FNewFilter, \, \seq \gets (), \, \Pl \gets \emptyset$ \;
\For{tuple $\vecx \in \D$ over a random-order stream}{
	\lIf{$F'.prune(\vecx)$ for any $F' \in \seq$}{
		\Continue 
	}
	\lIf{$|\Pl| < \nPl$}{
		$\Pl \gets \Pl \cup \{\vecx\} $;
		\Continue
	}
	$F.add(\vecx)$ \;
	$\Pl' \gets \{ \vecy \in \Pl: F.prune(\vecy) \text{ is true} \}$ \;
	\If{$|\Pl'| \ge \CPl |\Pl|$}{ \label{step:pool}
		Append $F$ to sequence \seq \;
		$F \gets \FNewFilter, \, \Pl \gets \Pl \setminus \Pl'$ \;
	}
}
Append $F$ to sequence \seq and let $\X = \{ F.best() : F \in \seq \}$ \;
Let \bestsofar be the best tuple in $\X \cup \Pl$ by pairwise comparisons \;
\Return \bestsofar \;
\caption{A general framework}
\label{alg:framework}
\end{algorithm}

\begin{algorithm}[t]
\DontPrintSemicolon
\KwIn{parameter \Creg}
\lClass{\FNewFilter}{
	$S \gets \emptyset$
}

\Function{prune$(\vecx)$}{
	\Return true, if $S \infer \vecx$ (see \cref{eq:implies}), otherwise false \label{step:filter}\;
}

\Function{add$(\vecx)$}{
	$S \gets S \cup \{ \vecx \}$ and sort $S$ by pairwise comparisons \;
}

\lFunction{best$()$}{
	\Return the best tuple $\vecx_1$ in $S$ 
}

\caption{Functions that define a filter for the \irm problem with no ties}
\label{alg:single}
\end{algorithm}

We proceed to elaborate on the mechanism of a filter.
The idea is to maintain a random sample $S$ of $\sz$ tuples, and
sort them in order of their utility. 
The total order of the tuples in $S$ can be constructed by pairwise comparisons, 
e.g., by insertion sort combined with binary search.
Suppose that $S = \{\vecx_1,\ldots,\vecx_\sz\}$, where $\vecx_1$ has the best utility.
Notice that $\vecu^T(\vecx_{j+1} - \vecx_{j}) \le 0$ for any $j$.
Thus, a sufficient condition for an arbitrary tuple \vecx to be worse than $\vecx_1$ is
\begin{align}
\label{eq:filter:lp}
	\vecx = \vecx_1 + \sum_{j=1}^{\sz-1} \alpha_j (\vecx_{j+1} - \vecx_j)
	\quad
	\text{such that} \; \alpha_j \ge 0 \quad \text{for all } j.
\end{align}

This condition asks to verify whether \vecx lies within a cone with apex $\vecx_1$, 
along direction \vecu.
The parameters $\alpha_j$ can be efficiently computed by a standard Linear Program (LP) solver.
If Condition~(\ref{eq:filter:lp}) can be satisfied for \vecx,
then \vecx can be pruned for further consideration.

Actually, it is possible to act more aggressively and prune tuples slightly better than $\vecx_1$, 
as long as it is assured that not all feasible tuples will be pruned.
Specifically, we can remove any  \vecx that deviates from the aforementioned cone within a distance of $\Creg$, that~is,
\begin{align}
	\label{eq:filter}
	&\min_\alpha \; \bigg\| \vecx - \vecx_1 - \sum_{j=1}^{\sz-1} \alpha_j (\vecx_{j+1} - \vecx_j) \bigg\|_2 \le \Creg 
	\ \ \  \text{s.t.} \ \  \alpha_j \ge 0 \  \text{for all } j .
\end{align}

To test whether a given tuple \vecx satisfies the above condition, 
one needs to search for parameters $\alpha_j$ over $[0,\infty)$ for all $j$.
The search can be implemented as an instance of constrained least squares, 
which can be efficiently solved via a quadratic program~(QP).

Given a sorted sample $S$ where $\vecx_1$ is the top tuple, we write 
\begin{align}
	\label{eq:implies}
	S \infer \vecx
\end{align}
if a tuple \vecx can be approximately represented by vectors in $S$ in a form of \cref{eq:filter}.

\begin{figure}
	\centering
	\includegraphics[width=0.3\textwidth]{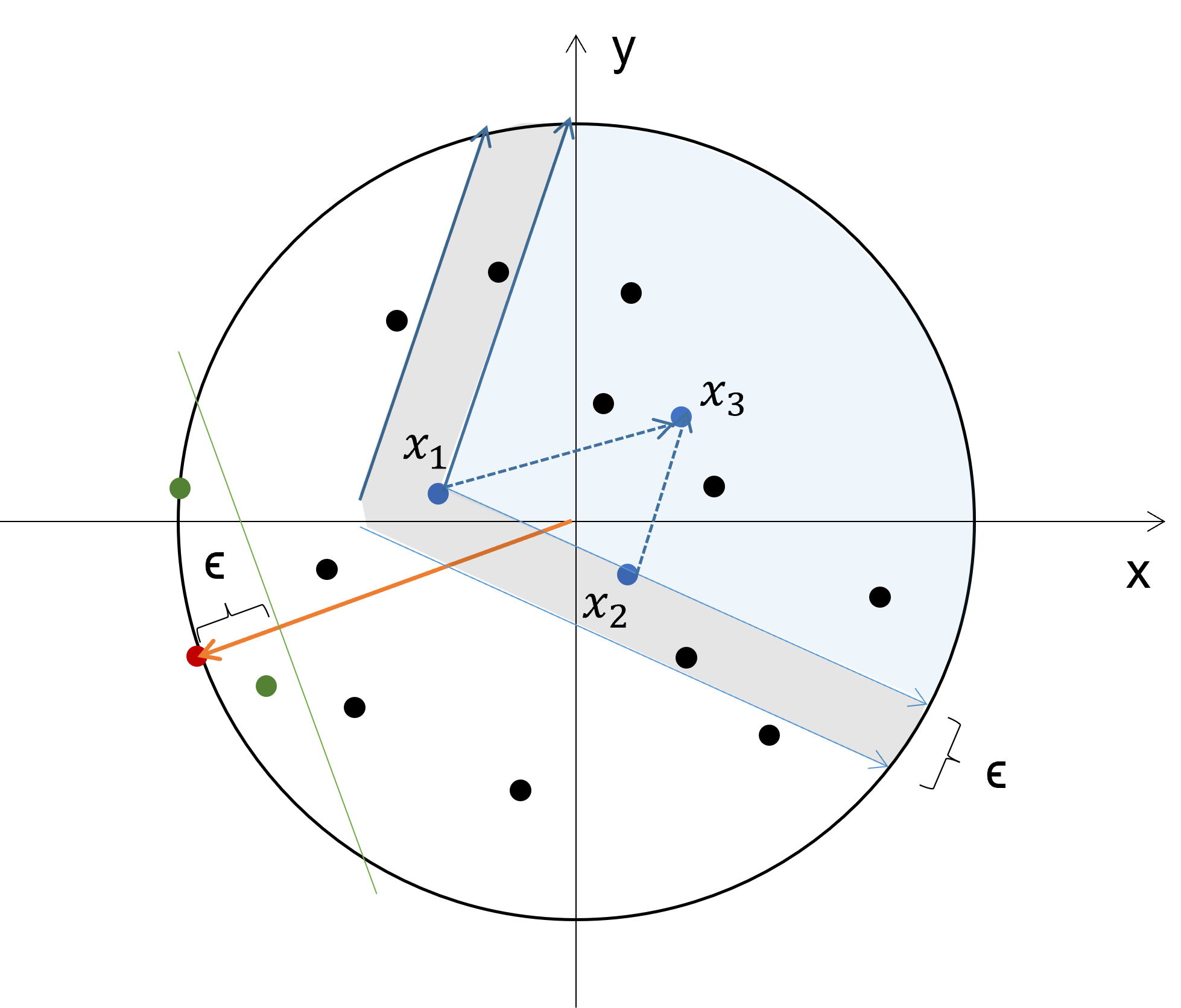}\par\vspace{-0.2cm}
	\caption{An illustrative example for a filter in \cref{alg:single}.
	The unknown utility vector \vecu is drawn in orange.
	Every tuple is shown as a point within a unit circle, 
	where the red point represents the top tuple, and 
	green points are feasible \Creg-regret tuples for the \irm problem.
	Suppose a filter collects a random sample $S$ of blue points.
	A sorted sample $S$ can be used to prune any point that falls into or is sufficiently close to (within a distance of \Creg) the blue cone.
	}\label{fig:filter}
\end{figure}

An example that illustrates the mechanism of a filter is displayed in \cref{fig:filter},
on which we elaborate below.
\begin{example}
In \cref{fig:filter},
a random sample $S = \{\vecx_1,\vecx_2,\vecx_3\}$ of three blue points is collected and sorted,
where $\vecx_1$ has the highest utility.
This means that 
$\util(\vecx_{j+1}) - \util(\vecx_{j}) = \vecu^T (\vecx_{j+1} - \vecx_j) < 0$, for any $j \in \{1,2\}$.
Compared to the point $\vecx_1$, 
a new point \vecx in the form of 
$\vecx = \vecx_1 + \sum_{j \in \{1,2\}} \alpha_j (\vecx_{j+1} - \vecx_j)$ with $\alpha_j \ge 0$ 
can only have a lower utility than $\util(\vecx_1)$,
since
\[
	\util(\vecx) = \vecu^T \bigg[ \vecx_1 + \sum_{j \in \{1,2\}} \alpha_j (\vecx_{j+1} - \vecx_j) \bigg] \le \util(\vecx_1).
\]
Thus, such a point \vecx can be safely pruned.
Geometrically, all such prunable points \vecx form a cone with apex $\vecx_1$,
as highlighted in the blue region in \cref{fig:filter}.
According to $\cref{eq:filter}$, 
any point that is sufficiently close to (within a distance of \Creg) the blue cone can also be pruned.
\end{example}

Upon a random-order stream of tuples, \cref{alg:framework,alg:single} collect a pool \Pl of \nPl initial tuples as a testbed for filter performance.
Then, subsequent tuples are gradually added into the first sample set $S_1$,
until a filter based on $S_1$ can prune at least a $\CPl = 5/8$ fraction of \Pl.
Then, $S_1$ is ready, and is used to prune tuples in the pool \Pl and future tuples over the stream.
Future tuples that survive the filter formed by $S_1$ will be gradually added into the pool \Pl and a second sample set $S_2$, and the process is repeated iteratively.
Finally, the algorithm returns the best tuple among all samples.
The following theorem states our main result about \cref{alg:framework,alg:single}.

\begin{theorem}
\label{thm:num_of_S}
Assume $\epsilon > 0$ and let $\nD = |\D|$ be the size of data.
Let $c = \util(\vecx^*) \in [0,1]$ be the utility of the best tuple~$\vecx^*$.
Under \cref{asm:irm},
with a pool size $\nPl = \ceil{64 \ln 2\nD}$ and $\CPl = 5/8$,
\cref{alg:framework,alg:single} return an $\Creg/c$-regret tuple for the \irm problem.

Let $\infd = \Cinfd$, where
$d$ is the intrinsic dimension of~\D.
Then, with probability at least $1-1/\nD$, at most
\[
	\bigO(\log(\nD) \, 4t \log (4t)) + \nPl
\]
comparisons
are made.
\end{theorem}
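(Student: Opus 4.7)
The plan is to establish the two claims of \cref{thm:num_of_S} separately: first correctness (feasibility of \bestsofar), then the query-complexity bound. Correctness is immediate from \cref{thm:anytime}: since \bestsofar is chosen as the best among all filter-best tuples and all pool members, and \cref{thm:anytime} guarantees that the best so far is always $\Creg/c$-feasible, no separate argument is needed.

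The interesting part is bounding the number of filters (and hence comparisons) with high probability. I would condition on an arbitrary fixed filter $F$ built from sample $S$ and analyze the pool test in Step~\ref{step:pool}. By \cref{lemma:filter}, a random sample of size $4\infd$ drawn from the currently-remaining set $\X$ prunes at least a $3/4$-fraction of $\X$ in expectation, and since the pool $\Pl \subseteq \X$ is itself a uniform random subset of $\X$, we have $\expect[|\Pl'|] = (|\X'|/|\X|)\,\nPl$. The heart of the argument is a two-sided concentration via the hypergeometric tail (\cref{lemma:tail}) with deviation parameter $t = 1/8$:
\begin{itemize}
\item \emph{A "good" filter (one pruning $\ge 3/4$ of $\X$) fails the $\CPl=5/8$ pool test} only if $|\Pl'|$ deviates below its mean by $\nPl/8$, which occurs with probability at most $e^{-2\nPl/64}$.
\item \emph{A "bad" filter (one pruning $< 1/2$ of $\X$) passes the pool test} only if $|\Pl'|$ deviates above its mean by $\nPl/8$, again with probability at most $e^{-2\nPl/64}$.
\end{itemize}

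With the chosen $\nPl = \lceil 64 \ln 2\nD \rceil$, each of these probabilities is at most $1/(2\nD^2)$. Since at most $\nD$ filter-candidates are ever tested, a union bound gives a total failure probability of at most $1/\nD$. Conditioning on the complementary event, every accepted filter prunes at least half of the currently remaining tuples, so the number of filters stored in \seq is at most $\lceil \log \nD \rceil$. Each filter of size $4\infd$ is built by insertion sort with binary search, costing $\bigO(4\infd \log(4\infd))$ comparisons. Adding the final $\nPl + \lceil \log \nD \rceil$ comparisons to select \bestsofar among the filter-bests and the pool yields the claimed bound.

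The main obstacle, and the place where one must be careful, is the coupling between the sample $S$ that \emph{builds} the filter and the pool $\Pl$ that \emph{tests} it: these sets are disjoint by construction, so conditional on $\X$ the pool $\Pl$ is independent of $S$ and the hypergeometric inequality applies cleanly. A secondary subtlety is that \cref{lemma:filter}'s $3/4$ expectation is over the random draw of $S$, not over $\Pl$; one invokes it to classify $S$ as "good" or "bad" and only then uses the concentration of $|\Pl'|$ around its conditional mean $|\X'|\,\nPl/|\X|$. Once both directions of concentration are in place, the rest is bookkeeping.
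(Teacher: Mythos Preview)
Your proposal mirrors the paper's proof closely: feasibility via \cref{thm:anytime}; two-sided hypergeometric concentration at threshold $5/8$ with slack $1/8$; a union bound over at most $\nD$ candidate samples; and the $4t\log(4t)$ insertion-sort cost per filter plus the final $\nPl+\lceil\log\nD\rceil$ comparisons to pick \bestsofar.

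The one point of divergence is in the direction that bounds each sample's size by $4t$. You first classify the realized $S$ by whether it prunes at least $3/4$ or less than $1/2$ of $\X$, and then concentrate $|\Pl'|$ conditionally on that classification. This leaves a gap: \cref{lemma:filter} only says $\expect_S[|\X'|]\ge\tfrac34|\X|$, so the realized sample of size $4t$ may well prune, say, $60\%$ of $\X$---neither ``good'' nor ``bad'' in your scheme---and could then fail the pool test without any of your rare events occurring, letting $|S|$ grow past $4t$. The paper avoids the dichotomy altogether: it applies \cref{lemma:filter} directly with the pool playing the role of $\X$, obtaining $\expect[|\Pl'|]\ge\tfrac34\nPl$ outright, and then invokes the tail bound in a single step. (As an aside, disjointness of $\Pl$ and $S$ does not make them conditionally independent; the dependence is mild, but your phrasing there is off.) With this adjustment your outline matches the paper.
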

The memory size, i.e., the number of tuples that will be kept by the algorithm during the execution, 
is $\bigO(\log(\nD) \, 4t)$, 
which is also logarithmic in~\nD.

In fact, \cref{alg:framework,alg:single} are an \emph{anytime} algorithm,
in the sense that the data stream can be stopped anytime,
while the algorithm is still able to return a feasible solution among all tuples that have arrived so far.

\begin{theorem}
\label{thm:anytime}
Under \cref{asm:irm},
the data stream may terminate at any moment during the execution of \cref{alg:framework,alg:single}, and
an $\Creg/c$-regret tuple will be returned for the \irm problem among all tuples that have arrived so far.
\end{theorem}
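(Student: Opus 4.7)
The plan is to prove the anytime guarantee by a case analysis on the fate of the locally optimal tuple. Fix an arbitrary moment during the execution of \cref{alg:framework,alg:single} and let \D denote all tuples that have arrived so far, with $\vecx^* = \arg\max_{\vecx \in \D} \util(\vecx)$. By construction, every arrived tuple is in exactly one of three states: (i) it currently sits in the pool \Pl, (ii) it has been added to some sample set $S$ of a filter that is already in \seq or of the filter $F$ still being built, or (iii) it was discarded because it was pruned by some filter already in \seq at the time of its arrival. To define \bestsofar as if the stream had just ended, append the current in-progress filter $F$ to \seq and compare the tops of all filters against \Pl, exactly as in the last two lines of \cref{alg:framework}.

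In cases (i) and (ii), $\vecx^*$ itself is among the tuples compared to form \bestsofar, so $\util(\bestsofar) \ge \util(\vecx^*)$ and the regret is trivially zero. The interesting case is (iii): $\vecx^*$ was pruned by the filter associated with some sorted sample $S = \{\vecx_1, \ldots, \vecx_\sz\}$, with $\vecx_1$ the top tuple of $S$. The pruning condition $S \infer \vecx^*$ (see \cref{eq:filter} and \cref{eq:implies}) guarantees the existence of non-negative scalars $\alpha_j$ such that
\[
	\bigg\| \vecx^* - \vecx_1 - \sum_{j=1}^{\sz-1} \alpha_j (\vecx_{j+1}-\vecx_j) \bigg\|_2 \le \Creg.
\]

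Now I would pass this geometric condition through the utility function. Because $S$ is sorted by utility, $\vecu^T(\vecx_{j+1}-\vecx_j) \le 0$ for every $j$, and since every $\alpha_j \ge 0$, the entire conic sum has non-positive inner product with \vecu. Combining this with Cauchy--Schwarz and $\|\vecu\|_2 = 1$, one obtains $\util(\vecx^*) \le \util(\vecx_1) + \Creg$. Since $\vecx_1 = F'.best()$ is included in the final pool of candidates, $\util(\bestsofar) \ge \util(\vecx_1) \ge \util(\vecx^*) - \Creg$. Recalling that $c = \util(\vecx^*) \le 1$, this rearranges to a regret of at most $\Creg = (\Creg/c)\,\util(\vecx^*)$, as required.

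The proof proposal just sketched is short; the only subtle point, and the main obstacle, is making sure that the set of tuples over which \bestsofar is determined genuinely covers all three fates above, even if the stream is cut off in the middle of building the current filter $F$. Once one agrees to the convention that \bestsofar is computed by first appending $F$ to \seq, the argument is clean. A second minor point is noting that the pruning condition cannot be activated by the future filter $F$ on earlier tuples, since $F$ only prunes arrivals that occur after it is promoted into \seq, so each tuple's fate is well-defined and fixed at the moment of its arrival.
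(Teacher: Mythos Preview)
Your proposal is correct and follows essentially the same approach as the paper: case analysis on whether $\vecx^*$ was retained (in a sample set or the pool) or pruned by some filter $S$, and in the pruning case, showing that the top tuple $\vecx_1$ of $S$ must satisfy $\util(\vecx^*) - \util(\vecx_1) \le \Creg$. The only cosmetic difference is that the paper phrases the pruning case as a contradiction (``if $\vecx_1$ were infeasible then $\vecx^*$ could not have been pruned''), whereas you derive the bound directly via Cauchy--Schwarz; your version is slightly more explicit about the pool case and the in-progress filter~$F$, but the substance is identical.
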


Proofs of \cref{thm:num_of_S,thm:anytime} are deferred to \cref{se:single:proofs}.

\section{Finding a tuple: oracle with ties}
\label{se:robust}

In this section, 
we first introduce a natural notion of uncomparable pairs to avoid error-prone comparisons, and then
we show how this new setting affects our algorithms.

It is clearly more difficult for a user to distinguish a pair of tuples with nearly equal utility.
Thus, it is reasonable to not force the user to make a choice in the face of a close pair, and 
allow the user to simply declare the comparison a tie instead.
We make this intuition formal below.

\begin{definition}[\vecu-similar pairs]
\label{def:cmp}
Two tuples $\vecx,\vecy \in D$ are \vecu-similar if 
\[
	|\util(\vecx) - \util(\vecy)| \le \Ccmp,
\]
for some fixed value \Ccmp.
We write $\vecx \cmp \vecy$ if they are uncomparable.
\end{definition}

\begin{assumption}
\label{asm:dirm}
A query about a \vecu-similar pair to the oracle will be answered with a tie.
Besides, as before, we assume that the best tuple~$\vecx^*$ has non-negative utility, 
$\util(\vecx^*) \ge 0$.
\end{assumption}

Typically, the value \Ccmp is fixed by nature, unknown to us, and cannot be controlled.
Note that when \Ccmp is sufficiently small, we recover the previous case in \cref{se:single} where every pair is comparable under \cref{asm:irm}.
By allowing the user to not make a clear-cut comparison for a \vecu-similar pair, one can no longer be guaranteed total sorting.
Indeed, it could be that every pair in \D is \vecu-similar.

In \cref{alg:robust}, we provide a filter to handle ties under \cref{asm:dirm}.
We maintain a totally sorted subset $R$ of \emph{representative} tuples in a sample set $S$.
For each representative $\vecy \in R$, we create a group~$G_\vecy$.
Upon the arrival of a new tuple \vecx, we sort \vecx into $R$ if no tie is encountered.
Otherwise, we encounter a tie with a tuple $\vecy \in R$ such that $\vecx \cmp \vecy$, and
we add \vecx into a group $G_\vecy$.
In the end, the best tuple in $R$ will be returned.

\begin{algorithm}[t]
\DontPrintSemicolon
\KwIn{parameter \Creg}
\lClass{\FNewFilter}{
	$R \gets \emptyset, \, S \gets \emptyset$;
}

\Function{prune$(\vecx)$}{
	\Return true, if $S \infersim \vecx$ (see \cref{eq:implies-robust}), otherwise false\;
}

\Function{add$(\vecx)$}{
	$S \gets S \cup \{\vecx\}$, and sort \vecx within $R$ \; 
	\eIf{no tie happens}{
		$R \gets R \cup \{\vecx\}$ and create $G_\vecx \gets \{\vecx\}$ \;
	}{
		Encounter a tie with $\vecy \in R$ such that $\vecx \cmp \vecy$ \;
		$G_\vecy \gets G_\vecy \cup \{\vecx\}$\;
	}
}

\lFunction{best$()$}{
	\Return the best tuple in $R$ 
}

\caption{Functions that define a filter for the \irm problem with ties}
\label{alg:robust}
\end{algorithm}

To see whether a filter in \cref{alg:robust} can prune a given tuple~\vecx, 
we test the following condition.
Let $R = \vecx_1, \ldots$ be the sorted list of representive tuples, 
where $\vecx_1$ is the top tuple.
Let $\Gs = G_1, \ldots$ be the corresponding groups.
A tuple \vecx can be pruned if 
there exists $\vecx'$ such that $\norm{\vecx - \vecx'}_2 \le \Creg$, where
\begin{align}
	\vecx' =& 
	\sum_{\vecy \in G_{1} \cup G_{2}} \nu_\vecy \, \vecy  
	+ \sum_{j=1} \sum_{ \vecz \in G_{j} } \sum_{\vecw \in  G_{{j+2}}} \alpha_{\vecw,\vecz} (\vecw - \vecz) \label{eq:filter-robust}\\
	&\text{such that} 
	\sum_{\vecy \in G_{1} \cup G_{2}} \nu_\vecy = 1 \text{ and all } \nu, \alpha \ge 0 \nonumber .
\end{align}
The idea is similar to \cref{eq:filter}, except that 
the top tuple $\vecx_1$ in \cref{eq:filter} is replaced by an aggregated tuple by convex combination, and
every pair difference $\vecx_{j+1} - \vecx_{j}$ is replaced by pair differences between two groups.
We avoid using pair differences between two consecutive groups, as tuples in group $G_{j}$ may not have higher utility than tuples in $G_{{j+1}}$.
If the above condition is met, then
we write 
\begin{align}
	\Gs \infersim \vecx
	\quad\text{and, if $\Gs$ is constructed using $S$,}\quad
	S \infersim \vecx \label{eq:implies-robust}.
\end{align}

The number of comparisons that is needed by \cref{alg:robust} depends on the actual input,
specifically, \Ccmpmax, the largest size of any pairwise \vecu-similar subset of~\D.
Note that the guarantee below recovers that of \cref{thm:num_of_S} up to a constant factor, if assuming \cref{asm:irm} where $\Ccmpmax=1$.
However, in the worst case, $\Ccmpmax = \bigO(\nD)$ and the guarantee becomes vacuous.
\begin{theorem}
\label{thm:num_of_S-robust}
Assume $\epsilon > 0$ and let $\nD = |\D|$ be the size of data.
Let $c = \util(\vecx^*) \in [0,1]$ be the utility of the best tuple~$\vecx^*$.
Under \cref{asm:dirm},
with a pool size $\nPl = \ceil{256\ln 2\nD}$ and $\CPl = 3/16$,
\cref{alg:framework,alg:robust} return an $(\Creg/c + 2\Ccmp)$-regret tuple for the \irm problem.

Let $\infd = \Cinfd$, where
$d$ is the intrinsic dimension of~\D, and
\Ccmpmax be the largest size of a pairwise \vecu-similar subset of \D.
Then, with probability at least $1-1/\nD$, at most
\[
	\bigO(\log(\nD) \, 16 \infd \Ccmpmax \log(16 \infd \Ccmpmax)) + \nPl
\]
comparisons
are made.
\end{theorem}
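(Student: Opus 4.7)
The plan is to mirror the argument of \cref{thm:num_of_S}, substituting \cref{lemma:filter-robust} for \cref{lemma:filter} and retuning all numerical constants to be consistent with the weaker pruning fraction $1/4$ (rather than $3/4$). The proof splits cleanly into (i) a regret analysis for the returned tuple and (ii) a high-probability bound on both the per-filter sample size and the number of filters.

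For the regret bound, I would distinguish two cases on the fate of the optimal tuple $\vecx^*$. Either $\vecx^*$ is never pruned and therefore ends up in the pool $\Pl$, in which case the final best-of-$\X \cup \Pl$ comparison pass returns a tuple that is no worse than $\vecx^*$ up to a single tie, costing at most $\Ccmp$ regret. Or $\vecx^*$ is pruned by some filter built from a sample $S$; reasoning exactly as in \cref{thm:anytime}, the form of \cref{eq:filter-robust} forces $S$ to contain some tuple of regret at most $\Creg/c$. The top representative of that filter's list $R$ is at most one tie away from this tuple, contributing another $\Ccmp$, and a final tie in the exhaustive pass over $\X \cup \Pl$ contributes at most one more $\Ccmp$. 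Summing yields the claimed $\Creg/c + 2\Ccmp$ regret.

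For the size/count bound, I would invoke \cref{lemma:filter-robust} to obtain $\expect[|\filtersim(S) \cap \X|] \ge \tfrac14 |\X|$ for any $\X \subseteq \D$ once $|S| \ge 16\infd \Ccmpmax$. The pool test in \cref{alg:framework} with $\CPl = 3/16$ sits exactly halfway between the ``good'' expectation $1/4$ and the ``bad'' expectation $1/8$, so the hypergeometric tail bound (\cref{lemma:tail}) applied with deviation $t = 1/16$ yields failure probability $e^{-2\nPl/16^2} = e^{-\nPl/128}$ per test in either direction. A union bound over the at most $\nD$ such tests, together with $\nPl = \ceil{256\ln 2\nD}$, bounds the total failure probability by $2\nD e^{-\nPl/128} \le 1/\nD$. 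Conditioned on success, every accepted filter prunes at least a $1/8$ fraction of the surviving tuples, so $\ceil{\log_{8/7}(\nD)} = \bigO(\log \nD)$ filters suffice, each of size at most $16\infd \Ccmpmax$. Sorting a fresh tuple into $R$ via binary search costs $\bigO(\log(16\infd \Ccmpmax))$ comparisons, and the final pass over $\X \cup \Pl$ contributes only lower-order terms, giving the stated comparison count.

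The main obstacle is not in this outer bookkeeping but was already absorbed into \cref{lemma:filter-robust}: because the filter in \cref{alg:robust} depends on the arrival order of the sample $S$ (different orderings of the same tuples can produce different representative lists $R$, and therefore different groups $\Gs$), the symmetrization step must first establish the existence of a large pool of ``good'' representatives via \cref{lemma:bounded-bad-representatives} before the double-counting over $S$ can be invoked. This is precisely why we lose a constant factor compared to the no-ties case, and why we must relax $\CPl$ from $5/8$ to $3/16$, enlarge $\nPl$, and pay an extra factor of $\Ccmpmax$ in the per-filter sample size.
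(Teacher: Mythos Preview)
Your proposal is correct and matches the paper's own proof essentially line for line: the same two-case regret analysis yielding $\Creg/c + 2\Ccmp$, the same retuning of constants ($\CPl = 3/16$ sitting between the $1/4$ expectation from \cref{lemma:filter-robust} and the $1/8$ target, tail deviation $1/16$, failure probability $2\nD e^{-2\nPl/16^2} \le 1/\nD$, and $\ceil{\log_{8/7}\nD}$ filters of size $16\infd\Ccmpmax$). Your closing paragraph correctly identifies that the real work has already been pushed into \cref{lemma:filter-robust} and \cref{lemma:bounded-bad-representatives}.
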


Proofs of \cref{thm:num_of_S-robust} are deferred to \cref{se:robust:proofs}.

\section{Improving baseline filters}
\label{se:extensions-short}

In this section, 
we improve existing filters by \citet{xie2019strongly},
by using linear and quadratic programs.
We will use these baselines in the experiments.
Previously, their filters rely on explicit computation of convex hulls,
which is feasible only in very low dimension~\citep{barber1996quickhull}.
Technical details are deferred to \cref{se:extensions}.

Existing filters iteratively compare a pair of random tuples, all of which are kept in 
$A = \{a_i\}$,
where $a_i = (\vecy,\vecz)$ such that $\util(\vecy) < \util(\vecz)$, 
and use them to prune potential tuples.

\paragraph{Filter by constrained utility space}

Given a tuple \vecx, we try to find 
a vector $\vecu$ that, for all $ (\vecy,\vecz) \in A$,
\begin{equation}
\restatableeq{\eqlp}{
	\vecu^T(\vecz-\vecy) \ge 1,\quad
	\vecu^T(\vecx-\vecz) \ge 1,\quad
	\vecu^T((1-\Creg)\vecx - \vecz) \ge 1.
}{eq:lp}
\end{equation}

We claim that a given tuple \vecx can be safely pruned if there is no vector~$\vecu$ satisfying LP~(\cref{eq:lp}).

\begin{restatable}{proposition}{proplp}
	\label{prop:lp}
	Consider a tuple $\vecx$ with $\util(\vecx) > \util(\vecz)$ and
	$\util(\vecx) - \util(\vecz) > \Creg \util(\vecx)$ for every $(\vecy, \vecz) \in A$.
	Then there is a solution to LP~(\cref{eq:lp}).
\end{restatable}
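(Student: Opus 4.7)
The plan is to exhibit an explicit feasible point for LP~(\cref{eq:lp}) by taking the true (unknown) utility vector $\vecu$ itself and scaling it up. The key observation is that all three families of inequalities in~(\cref{eq:lp}) reduce, up to the right-hand side of~$1$, to \emph{strict} inequalities about $\util(\cdot)$ that are guaranteed by the hypotheses of the proposition together with the definition of~$A$.

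First I would unpack each constraint with the true utility vector $\vecu$. For every $(\vecy,\vecz)\in A$ we have, by the defining property of $A$, $\util(\vecy)<\util(\vecz)$, which is exactly $\vecu^T(\vecz-\vecy)>0$. By the first hypothesis on $\vecx$, $\util(\vecx)>\util(\vecz)$, i.e.\ $\vecu^T(\vecx-\vecz)>0$. By the second hypothesis, $\util(\vecx)-\util(\vecz)>\Creg\,\util(\vecx)$, which rearranges to $\vecu^T((1-\Creg)\vecx-\vecz)>0$. So all three families of constraints hold with strict ``$>0$'' when $\vecu$ is plugged in.

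Next I would convert these strict positive inequalities into the ``$\ge 1$'' form required by the LP via positive scaling. Because $A$ is finite, the minimum of the three quantities $\vecu^T(\vecz-\vecy)$, $\vecu^T(\vecx-\vecz)$, $\vecu^T((1-\Creg)\vecx-\vecz)$ over $(\vecy,\vecz)\in A$ is a strictly positive number; call it $m>0$. Then $\vecu'=\vecu/m$ satisfies every constraint of~(\cref{eq:lp}) with right-hand side at least~$1$, so $\vecu'$ is a feasible solution to the LP.

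The only potential pitfall is whether the LP implicitly constrains the magnitude of $\vecu$ (e.g.\ a normalization such as $\|\vecu\|_2=1$), which would forbid the scaling trick; but as LP~(\cref{eq:lp}) is stated there is no such constraint, so the scaling argument goes through verbatim. This gives the claim.
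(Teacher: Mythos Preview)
Your proposal is correct and follows essentially the same approach as the paper: plug in the true utility vector $\vecu$, observe that the hypotheses together with the definition of $A$ make all three families of constraints strictly positive, and then scale $\vecu$ so that each left-hand side is at least~$1$. Your version is in fact slightly more explicit, since you identify the scaling factor as $1/m$ with $m$ the minimum over the finitely many constraints, whereas the paper simply asserts that such a scaling exists.
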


\paragraph{Filter by conical hull of pairs}

Given a tuple \vecx, we propose to solve the following quadratic program (QP),
\begin{align}
\restatableeq{\eqqppairs}{
	& \min_{\nu,\beta} \bigg\| \vecx - \sum_{a_i = (\vecy,\vecz) \in A} (\nu_{i1} \, \vecy + \nu_{i2} \, \vecz) - \sum_{a_i = (\vecy,\vecz) \in A} \beta_i (\vecy-\vecz) \bigg\| \\
	& \text{such that}
	\sum_{a_i = (\vecy,\vecz) \in A} \nu_{i1}+\nu_{i2} = 1 
	 \quad\text{and}\quad
	\nu_{i1}, \nu_{i2}, \beta_i \ge 0 \quad\text{for all } i \nonumber
}{eq:qp-pairs}.
\end{align}

If the optimal value of the QP is at most $\Creg$, we prune \vecx.

\begin{restatable}{proposition}{propqppairs}
	\label{prop:qp-pairs}
	Let $\vecu^T \vecx^* = c$.
	A tuple $\vecx \in \D$ can be pruned if 
	the objective value of the quadratic program~(\cref{eq:qp-pairs}) is at most $\Creg / c$.
\end{restatable}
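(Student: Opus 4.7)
The plan is to exploit the smallness of the QP's objective to locate a surrogate point $\vecx'$ that both sits near $\vecx$ and has utility upper-bounded by the best tuple already stored inside the pairs of $A$. Specifically I would fix an optimal solution $(\nu^*,\beta^*)$ of the QP in \cref{eq:qp-pairs} with value at most $\Creg/c$, and let
\[
\vecx' \,=\, \sum_{a_i=(\vecy,\vecz)\in A}(\nu^*_{i1}\vecy + \nu^*_{i2}\vecz) \,+\, \sum_{a_i=(\vecy,\vecz)\in A}\beta^*_i(\vecy-\vecz),
\]
so that $\|\vecx - \vecx'\|_2 \le \Creg/c$ by feasibility.

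Next I would argue separately about the two summands. The first is a proper convex combination of the tuples appearing in $A$, because the constraint $\sum_i(\nu^*_{i1}+\nu^*_{i2})=1$ holds with all coefficients non-negative; hence its utility is at most $c_A$, the best utility of any tuple listed in a pair of $A$. The second summand is a conical combination of pair differences $\vecy-\vecz$; since every pair satisfies $\util(\vecy)<\util(\vecz)$ by the defining rule of $A$, we have $\vecu^T(\vecy-\vecz)<0$, and combined with $\beta^*_i\ge 0$ this forces its utility contribution to be non-positive. Adding the two bounds yields $\util(\vecx') \le c_A$.

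Then, via Cauchy--Schwarz and the normalisation $\|\vecu\|_2=1$,
\[
\util(\vecx) - \util(\vecx') \,=\, \vecu^T(\vecx - \vecx') \,\le\, \|\vecu\|_2\,\|\vecx - \vecx'\|_2 \,\le\, \Creg/c,
\]
so $\util(\vecx) \le c_A + \Creg/c$, which is the desired pruning certificate: the top tuple of $A$, which is always retained by the algorithm, lies within exactly the additive slack tolerated by the overall $\Creg/c$-regret target. The $\Creg=0$ version collapses to the informal argument already supplied before the proposition (the first sum alone upper-bounds $\util$ and the second only decreases it), so the only real technical content is the clean decomposition into convex-plus-conical pieces together with the Cauchy--Schwarz step; I expect no serious obstacle beyond verifying that the strict inequality $\vecu^T(\vecy-\vecz)<0$ is supplied by \cref{asm:irm}, which it is.
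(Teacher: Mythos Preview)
Your proposal is correct and follows essentially the same approach as the paper: both decompose the QP optimiser into a convex-combination part (whose utility is bounded by the top tuple in $A$) and a conical part (with non-positive utility), and then use the distance bound to control the slack. Your write-up is more careful than the paper's terse argument, in particular by making the Cauchy--Schwarz step explicit; the paper's one-line handling of the $\Creg>0$ case (``there exists a tuple in some pair of $A$ at distance $\Creg$'') is looser than what you wrote, since the QP only guarantees proximity to the \emph{combination} $\vecx'$, not to any single stored tuple.
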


If we set $\epsilon = 0$, then we can use LP solver (similar to \cref{eq:filter:lp}) instead of QP solver.
This results in a weaker but computationally more efficient filter.


\section{Experimental evaluation}
\label{se:experiment}

In this section, we evaluate key aspects of our method and the proposed filters.
Less important experiments and additional details are deferred to \cref{se:experiment:supp}.
In particular, we investigate the following questions.
	($i$) How accurate is the theoretical bound in \cref{lemma:filter}? 
    More specifically, we want to quantify the sample size required by \cref{alg:single} to prune at least half of the tuples, 
    and understand its dependance on the data size \nD, dimension $d$, 
    and regret parameter \Creg. 
    ($ii$) Effect of parameters of \cref{alg:framework}.
	 (\cref{sse:ex:param})
	($iii$) How scalable are the proposed filters?
	($iv$) How do the proposed filters perform over real-life datasets?
	($v$) How do ties in comparisons affect the performance of the proposed filters?
Our implementation is available at \code

\smallskip
\noindent
Next, let us introduce the adopted datasets and baselines.

\smallskip
\para{Datasets.}
A summary of the real-life datasets we use for our evaluation can be found in \cref{tbl:datasets}.
To have more flexible control over the data parameters,
we additionally generate the following two types of synthesized data.
	\emph{sphere}: Points sampled from the unit $d$-sphere \sphere uniformly at random.
	\emph{clusters}: Normally distributed clustered data, where each cluster is centered at a random point on unit $d$-sphere \sphere.
To simulate an oracle, we generate a random utility vector \vecu on the unit $d$-sphere for every run.
More details about datasets can be found in \cref{se:experiment:supp}.

\smallskip
\para{Baselines.}
A summary of all algorithms is given in \cref{tbl:algs}.
We mainly compare with (enhanced) pruning techniques (\pairqp, \pairlp and \hplp) by \citet{xie2019strongly}, 
halfspace-based pruning (\hp), and 
a random baseline (\rand).
Discussion of other baselines is deferred to \cref{se:experiment:supp}.
We instantiate every filter (except for the \hp and \rand) in the framework provided in \cref{alg:framework},
that is, we iteratively create a new filter that can prune about half of the remaining tuples.
This is a reasonable strategy, and will be justified in detail in \cref{sse:ex:scalable}.
For pair-based filters, a new pair is made after two consecutive calls of the \emph{add} function.
The pool size \nPl and threshold \CPl in \cref{alg:framework} are set to be 100 and 0.5, respectively.
Since the proposed algorithm \listqp only guarantees a regret of $\Creg/\util(x^*)$, where $x^*$ is the best tuple in the dataset,
we pre-compute the value of $\util(x^*) \in [0,1]$, and adjust the regret parameter of \listqp to be $\Creg \util(x^*)$.

\begin{table}[t]
  \caption{Real-life datasets statistics}
  \label{tbl:datasets}
  \centering
  \vspace{-0.3cm}
\begin{tabular}{lrr}
\toprule
Dataset & $\nD = |\D|$ 
        & $d$ \\
\midrule
player \citep{dataplayer}	&17 386	&20	\\
youtube \citep{datayoutube}	&29 406	&50	\\
game \citep{datagame}	&60 496	&100	\\
house \citep{datahouse}	&303 032	&78	\\
car \citep{datacar}	&1 002 350	&21	\\
\bottomrule
\end{tabular}
\end{table}

\begin{table}[t]
  \caption{Summary of our methods and the baselines.}
  \label{tbl:algs}
  \centering
  \vspace{-0.3cm}
\begin{tabular}{l p{0.35\textwidth}}
\toprule
Name & Brief description\\
\midrule
List-[QP|LP]	&\emph{Our method}: prune a tuple if it is close to a conical hull formed by a sorted list of random tuples (\cref{alg:single}), equipped with a QP or LP solver.	\smallskip\\
Pair-[QP|LP]	&Prune a tuple if it is close to a conical hull formed by a set of compared random pairs, equipped with QP (\cref{eq:qp-pairs}) or LP 
solver. \smallskip\\
\hplp	&Prune a tuple if LP (\cref{eq:lp}) is infeasible, i.e.,
the tuple is dominated by a set of compared random pairs over the entire constrained utility space.	\smallskip\\
\hp	&Prune a tuple \vecx if $\vecx^T(\vecz-\vecy) < 0$ for any compared pair $\vecz,\vecy$ such that $\util(\vecy) < \util(\vecz)$, that is,
tuple \vecx falls outside the constrained utility space for \vecu. \smallskip\\
\rand	&Return the best tuple among a subset of 50 random tuples.	\\
\bottomrule
\end{tabular}
\end{table}

\begin{figure}[!ht]
    \centering
    \subcaptionbox{Memory size required to prune half of tuples\label{fig:n-sz}}{
    	\pgfinput[trim={0 0 0 0.1cm},clip]{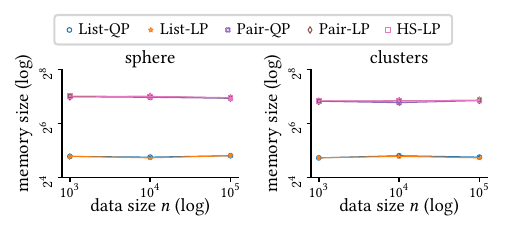}\vspace{-0.3cm}
    }\par\vspace{0.2cm}
    \subcaptionbox{Scalability with respect to dimension $d$\label{fig:d-runtime}}{
  	    \pgfinput[trim={0 0 0 0.8cm},clip]{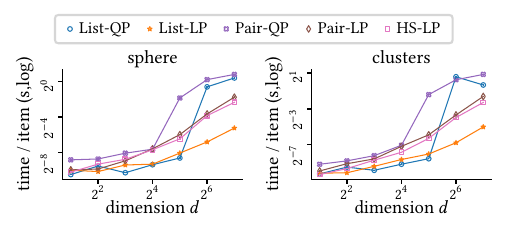}\vspace{-0.3cm}
    }
    \vspace{-0.3cm}
    \caption{Scalability of filters for synthetic data}
    \label{fig:ex}
\end{figure}

\begin{figure*}[t]
    \centering
   	\pgfinput[trim={0 0 0 0.1cm},clip]{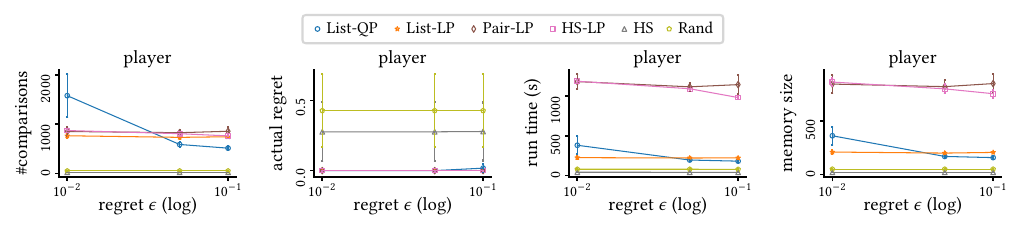}\par\vspace{-0.3cm}
   	\pgfinput[trim={0 0 0 0.8cm},clip]{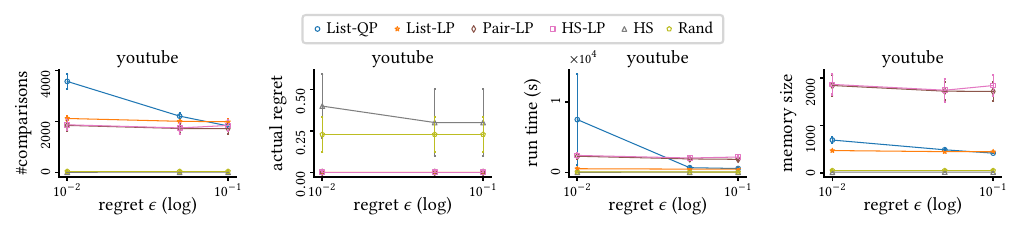}\par\vspace{-0.3cm}
   	\pgfinput[trim={0 0 0 0.8cm},clip]{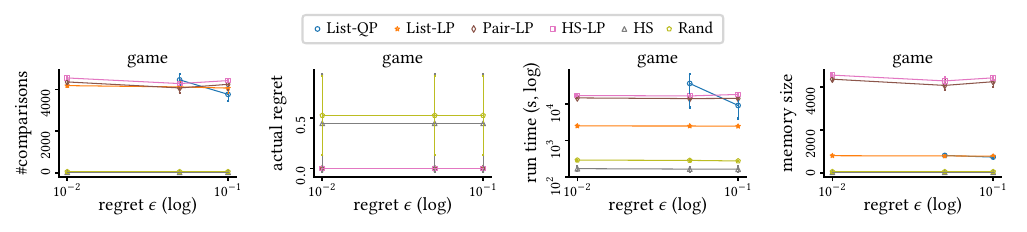}\par\vspace{-0.3cm}
 	\pgfinput[trim={0 0 0 0.8cm},clip]{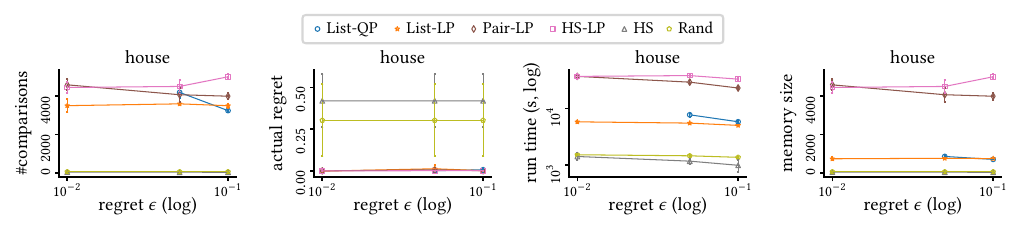}\par\vspace{-0.3cm}
 	\pgfinput[trim={0 0 0 0.8cm},clip]{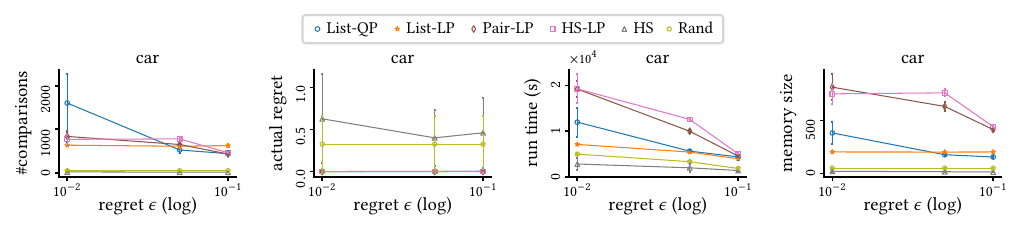}
 	\vspace{-0.6cm}
    \caption{Solving the \irm problem on the real-life datasets}
    \label{fig:ex:single}
\end{figure*}

\begin{figure*}[t]
    \centering
    \vspace{-0.4cm}
    \pgfinput{gap-gap-ncmp}
    \vspace{-0.6cm}
    \caption{The effect of ties and parameter \Ccmp}
    \label{fig:ties}
\end{figure*}

\subsection{Sample size in practice}
\label{sse:ex:sameplesize}

\cref{lemma:filter} proves a theoretical bound on the size of a random sample 
required by \cref{alg:single} to prune at least half of a given set \D of tuples in expectation.
This bound is $2\infd$ where $\infd = \Cinfd$.
Importantly, the bound does not depend on the data size $|\D|$,
which we verify later in~\cref{sse:ex:scalable}.

In \cref{fig:d-sz,fig:tol-sz} (in Appendix), we compute and present the exact required size for synthesized data, and illustrate
how the size changes with respect to the dimension $d$ and regret parameter \Creg.
As can be seen, the bound provided in \cref{lemma:filter} captures a reasonably accurate dependence on $d$ and \Creg, up to a constant factor.

\subsection{Scalability}
\label{sse:ex:scalable}

The running time required for each filter to prune a given tuple depends heavily on its memory size,
i.e., the number of tuples it keeps.
In \cref{fig:n-sz}, we compute and show the required memory size for a filter to prune half of a given set \D of tuples, and
how the size changes with respect to the data size $\nD = |\D|$.
Impressively, most competing filters that adopt a randomized approach only require constant memory size, regardless of the data size \nD.
This also confirms the effectiveness of randomized algorithms in pruning.

Based on the above observation,
it is usually not feasible to maintain a single filter to process a large dataset \D.
If a filter requires~$s$ tuples in memory to prune half of \D,
then at least $s \log(|\D|)$ tuples are expected to process the whole dataset \D.
However, the running time for both LP and QP solvers is superlinear in the memory size of a filter~\citep{cohen2021solving}, 
which means that running a filter with $s \log(|\D|)$ tuples is considerably slower than running $\log(|\D|)$ filters, each with $s$ tuples.
The latter approach enables also parallel computing for faster processing.

Therefore, we instantiate each competing filter (except for \hp and \rand) in the framework provided in \cref{alg:framework}, and
measure the running time it takes to solve the \irm problem.
In the rest of this section, 
we investigate the effect of the data dimension $d$ and regret parameter \Creg on the running time.

\smallskip
\para{Effect of data dimension $d$.}
In \cref{fig:d-runtime}, 
we fix a regret parameter $\Creg=0.01$, and
examine how the running time of a filter varies with respect to the data dimension $d$ on synthesized data.

The first observation from \cref{fig:d-runtime} is that LP-based filters are more efficient than their QP counterparts.
Particularly, \pairqp is too slow to be used, and we have to settle for its LP counterpart \pairlp in subsequent experiments.

Let us limit the comparison to those LP-based filters.
\pairlp and \hplp are more computationally expensive than \listlp.
For \pairlp, the reason is obvious: 
as discussed at the end of \cref{sse:cone-by-pairs}, 
\pairlp makes relatively more comparisons and 
every compared pair of tuples adds two more parameters to the LP.
For \hplp, the number of parameters in its LP depends linearly on both the dimension $d$ and number of compared pairs,
while \listlp only depends on the latter.
Thus, \hplp is less scalable by design.
\note{When $d$ increases, the impact of \#parameters needed for compared pairs dominates that of those modeling \vecu for \hplp, so its running time converges to \listlp.}

\smallskip
\para{Effect of regret parameter \Creg.}
The effect of the regret parameter \Creg can be found in \cref{fig:ex:single} for all real-life datasets.
Generally, a larger value of \Creg decreases the running time, 
as each filter can be benefited by more aggressive pruning.

The running time of \listqp deteriorates dramatically for a small value of \Creg, and
the number of comparisons needed also rises considerably.
The reason is that,
most numerical methods for solving a mathematical program have a user-defined \emph{precision} parameter.
Small precision gives a more accurate solution, and at the same time causes a longer running time.
When \Creg gets close to the default precision, or to the actual precision after the maximum number of iterations is exceeded, \listqp fails to prune tuples.
Thus, \listqp is advised to be used for a relatively large regret value \Creg.

In regard to the memory size,
as we can see in \cref{fig:ex:single},
\listqp and \listlp consistently use a much smaller memory size than \pairlp and \hplp.
This also demonstrates the advantage of using a sorted list over a set of compared pairs.

\subsection{The case of oracles with no ties}
\label{sse:ex:single}

The performance of competing filters can be found in \cref{fig:ex:single} for all real-life datasets.
The average and standard error of three random runs are reported.
We instantiate each competing filter (except for \hp and \rand) in the framework provided in \cref{alg:framework} to solve the \irm problem.
Meanwhile, we vary the regret parameter \Creg to analyze its effect.
We also experimented with a smaller \Creg value such as 0.005, 
the observations are similar except that the \listqp filter is significantly slower for reasons we mentioned in \cref{sse:ex:scalable}.

Except \hp and \rand,
every reasonable filter succeeds in returning a low-regret tuple.
We limit our discussion to only these reasonable filters.
In terms of the number of comparisons needed,
\listqp outperforms the rest on most datasets provided that the regret value \Creg is not too small.
We rate \listlp as the runner-up, and it becomes the top one when the regret value \Creg is small.
Besides, \listlp is the fastest to run.
The number of comparisons needed by \hplp and \pairlp is similar, and 
they sometimes perform better than others, for example, over the youtube dataset.

Let us make a remark about the regret value \Creg.
Being able to exploit a large value of \Creg in pruning is the key to improving performance.
Notice that both \pairlp and \listlp cannot benefit from a large regret value \Creg by design.
Though \hplp is designed with \Creg in mind, 
it is more conservative as its pruning power depends on $\Creg \vecu^T\vecx$ instead of $\Creg \vecu^T\vecx^*$,
where \vecx is the tuple to prune.

In summary, we can conclude that
the \listqp filter is recommended for a not too small regret parameter \Creg (i.e., $\Creg \ge 0.1$), and
the \listlp filter is recommended otherwise.
In practice, since both \listqp and \listlp follow an almost identical procedure, 
one could always start with \listqp, and switch to \listlp if the pruning takes too long time. 

\todo{I cannot explain why in house dataset, \#comparison goes up for \listlp and \hplp by increasing \Creg. 
This also happens mildly in game dataset. 
One common feature of them is having a lot of one-hot attributes.}

\subsection{Effect of ties}
\label{sse:ex:ties}

According to \cref{asm:dirm},
the oracle returns a tie if the difference in utility between two given tuples is within a parameter \Ccmp.
For filters like \pairlp and \hplp, the most natural strategy to handle a tie for a pair of tuples is to simply discard one of them.
It is expected that ties worsen the performance of a filter, as they fail to provide
additional information required by the method for pruning.

In \cref{fig:ties}, we vary the value of parameter \Ccmp to see 
how it affects the performance of the proposed filters.
It is not surprising that
as the value of \Ccmp increases, the number of ties encountered and the number of comparisons made by all algorithms both increase.

Notably, the running time of \listqp and \listlp grows significantly as \Ccmp increases. 
This is because one parameter is needed in their solvers for every pair of tuples between two consecutive groups $G_i,G_j$, and
the total number of parameters can increase significantly if the size of both groups increases.
This behavior also reflects the fact that a partially sorted list is less effective for pruning.
However, how to handle a large \Ccmp remains a major open problem.
Hence, we conclude that the proposed algorithms work well provided that the parameter \Ccmp is not too large.

\paragraph{Summary}
\label{sse:ex:summary}

After the systematical evaluation,
we conclude with the following results.
	($i$) LP-based filters are more efficient than their QP counterparts, but less effective in pruning.
	($ii$) \listlp is the most scalable filter.
	The runner-up is \listqp, provided that the data dimension is not too large ($d < 128$) and the regret parameter \Creg is not too small ($\Creg \ge 0.1$).
	($iii$) To minimize the number of requested comparisons,
	\listqp is recommended for a not too small \Creg ($\Creg \ge 0.1$).
	When \Creg is small, we recommend \listlp.
	($iv$) Good performance can be retained if the oracle is sufficiently discerning ($\Ccmp \le 0.01$).
	Otherwise, a better way to handle ties will be needed.

\todo[Minor points]{
\begin{itemize}
	\item Handle (few) rounding error in comparison
	\item Run for offline settings
	\item Combine different kinds of filters
\end{itemize}
}

\section{Conclusion}
\label{se:conclusion}

We devise a single-pass streaming algorithm for finding a high-utility tuple by making adaptive pairwise comparisons.
We also show how to maintain the guarantee when ties are allowed in a comparison between two tuples with nearly equal utility.
%
Our work suggests several future directions to be explored.
Those include
finding a high-utility tuple in the presence of noise,
incorporating more general functions for modeling tuple utility, 
devising methods with provable quarantees for arbitrary-order data streams,
and
devising more efficient algorithms to handle ties.

\begin{acks}
This research is supported by the Academy of Finland projects MALSOME (343045) and MLDB (325117),
the ERC Advanced Grant REBOUND (834862),
the EC H2020 RIA project SoBigData++ (871042),
and the Wallenberg AI, Autonomous Systems and Software Program (WASP)
funded by the Knut and Alice Wallenberg Foundation.
\end{acks}

\note[SVM]{Just a remark that SVM is a special instance of QP.
Different from constrained least squares, 
it minimizes the $\ell_2$ norm the coefficient vector for a linear function 
subject to a set of linear inequalities (i.e., hard margin constraints), one for each data point.}


\newpage
\balance
\bibliographystyle{ACM-Reference-Format}
\bibliography{references}


\begin{thebibliography}{35}


\ifx \showCODEN    \undefined \def \showCODEN     #1{\unskip}     \fi
\ifx \showDOI      \undefined \def \showDOI       #1{#1}\fi
\ifx \showISBNx    \undefined \def \showISBNx     #1{\unskip}     \fi
\ifx \showISBNxiii \undefined \def \showISBNxiii  #1{\unskip}     \fi
\ifx \showISSN     \undefined \def \showISSN      #1{\unskip}     \fi
\ifx \showLCCN     \undefined \def \showLCCN      #1{\unskip}     \fi
\ifx \shownote     \undefined \def \shownote      #1{#1}          \fi
\ifx \showarticletitle \undefined \def \showarticletitle #1{#1}   \fi
\ifx \showURL      \undefined \def \showURL       {\relax}        \fi
\providecommand\bibfield[2]{#2}
\providecommand\bibinfo[2]{#2}
\providecommand\natexlab[1]{#1}
\providecommand\showeprint[2][]{arXiv:#2}

\bibitem[Ailon et~al\mbox{.}(2008)]%
        {ailon2008aggregating}
\bibfield{author}{\bibinfo{person}{Nir Ailon}, \bibinfo{person}{Moses
  Charikar}, {and} \bibinfo{person}{Alantha Newman}.}
  \bibinfo{year}{2008}\natexlab{}.
\newblock \showarticletitle{Aggregating inconsistent information: ranking and
  clustering}.
\newblock \bibinfo{journal}{\emph{Journal of the ACM (JACM)}}
  \bibinfo{volume}{55}, \bibinfo{number}{5} (\bibinfo{year}{2008}),
  \bibinfo{pages}{1--27}.
\newblock


\bibitem[Anonymous(2022)]%
        {datacar}
\bibfield{author}{\bibinfo{person}{Anonymous}.}
  \bibinfo{year}{2022}\natexlab{}.
\newblock \bibinfo{title}{Car Sales}.
\newblock
\newblock
\urldef\tempurl%
\url{https://www.kaggle.com/datasets/ekibee/car-sales-information?select=region25_en.csv}
\showURL{%
\tempurl}
\newblock
\shownote{last accessed on 22/10/2022}.


\bibitem[Barber et~al\mbox{.}(1996)]%
        {barber1996quickhull}
\bibfield{author}{\bibinfo{person}{C~Bradford Barber}, \bibinfo{person}{David~P
  Dobkin}, {and} \bibinfo{person}{Hannu Huhdanpaa}.}
  \bibinfo{year}{1996}\natexlab{}.
\newblock \showarticletitle{The quickhull algorithm for convex hulls}.
\newblock \bibinfo{journal}{\emph{ACM Transactions on Mathematical Software
  (TOMS)}} \bibinfo{volume}{22}, \bibinfo{number}{4} (\bibinfo{year}{1996}),
  \bibinfo{pages}{469--483}.
\newblock


\bibitem[Borgwardt(1982)]%
        {borgwardt1982average}
\bibfield{author}{\bibinfo{person}{K-H Borgwardt}.}
  \bibinfo{year}{1982}\natexlab{}.
\newblock \showarticletitle{The average number of pivot steps required by the
  simplex-method is polynomial}.
\newblock \bibinfo{journal}{\emph{Zeitschrift f{\"u}r Operations Research}}
  \bibinfo{volume}{26}, \bibinfo{number}{1} (\bibinfo{year}{1982}),
  \bibinfo{pages}{157--177}.
\newblock


\bibitem[Borzsony et~al\mbox{.}(2001)]%
        {borzsony2001skyline}
\bibfield{author}{\bibinfo{person}{Stephan Borzsony}, \bibinfo{person}{Donald
  Kossmann}, {and} \bibinfo{person}{Konrad Stocker}.}
  \bibinfo{year}{2001}\natexlab{}.
\newblock \showarticletitle{The skyline operator}. In
  \bibinfo{booktitle}{\emph{Proceedings 17th international conference on data
  engineering}}. IEEE, \bibinfo{pages}{421--430}.
\newblock


\bibitem[Bustos(2022)]%
        {datagame}
\bibfield{author}{\bibinfo{person}{Martin Bustos}.}
  \bibinfo{year}{2022}\natexlab{}.
\newblock \bibinfo{title}{Steam Games Dataset}.
\newblock
\newblock
\urldef\tempurl%
\url{https://www.kaggle.com/datasets/fronkongames/steam-games-dataset}
\showURL{%
\tempurl}
\newblock
\shownote{last accessed on 22/10/2022}.


\bibitem[Carey and Kossmann(1997)]%
        {carey1997saying}
\bibfield{author}{\bibinfo{person}{Michael~J Carey} {and}
  \bibinfo{person}{Donald Kossmann}.} \bibinfo{year}{1997}\natexlab{}.
\newblock \showarticletitle{On saying “enough already!” in sql}. In
  \bibinfo{booktitle}{\emph{Proceedings of the 1997 ACM SIGMOD international
  conference on Management of data}}. \bibinfo{pages}{219--230}.
\newblock


\bibitem[Chatziafratis et~al\mbox{.}(2021)]%
        {chatziafratis2021maximizing}
\bibfield{author}{\bibinfo{person}{Vaggos Chatziafratis},
  \bibinfo{person}{Mohammad Mahdian}, {and} \bibinfo{person}{Sara Ahmadian}.}
  \bibinfo{year}{2021}\natexlab{}.
\newblock \showarticletitle{Maximizing Agreements for Ranking, Clustering and
  Hierarchical Clustering via MAX-CUT}. In
  \bibinfo{booktitle}{\emph{International Conference on Artificial Intelligence
  and Statistics}}. PMLR, \bibinfo{pages}{1657--1665}.
\newblock


\bibitem[Chen and Suh(2015)]%
        {chen2015spectral}
\bibfield{author}{\bibinfo{person}{Yuxin Chen} {and} \bibinfo{person}{Changho
  Suh}.} \bibinfo{year}{2015}\natexlab{}.
\newblock \showarticletitle{Spectral mle: Top-k rank aggregation from pairwise
  comparisons}. In \bibinfo{booktitle}{\emph{International Conference on
  Machine Learning}}. PMLR, \bibinfo{pages}{371--380}.
\newblock


\bibitem[Cohen et~al\mbox{.}(2021)]%
        {cohen2021solving}
\bibfield{author}{\bibinfo{person}{Michael~B Cohen}, \bibinfo{person}{Yin~Tat
  Lee}, {and} \bibinfo{person}{Zhao Song}.} \bibinfo{year}{2021}\natexlab{}.
\newblock \showarticletitle{Solving linear programs in the current matrix
  multiplication time}.
\newblock \bibinfo{journal}{\emph{Journal of the ACM (JACM)}}
  \bibinfo{volume}{68}, \bibinfo{number}{1} (\bibinfo{year}{2021}),
  \bibinfo{pages}{1--39}.
\newblock


\bibitem[Dasgupta(2005)]%
        {dasgupta2005coarse}
\bibfield{author}{\bibinfo{person}{Sanjoy Dasgupta}.}
  \bibinfo{year}{2005}\natexlab{}.
\newblock \showarticletitle{Coarse sample complexity bounds for active
  learning}.
\newblock \bibinfo{journal}{\emph{Advances in neural information processing
  systems}}  \bibinfo{volume}{18} (\bibinfo{year}{2005}).
\newblock


\bibitem[Haghiri et~al\mbox{.}(2018)]%
        {haghiri2018comparison}
\bibfield{author}{\bibinfo{person}{Siavash Haghiri}, \bibinfo{person}{Damien
  Garreau}, {and} \bibinfo{person}{Ulrike Luxburg}.}
  \bibinfo{year}{2018}\natexlab{}.
\newblock \showarticletitle{Comparison-based random forests}. In
  \bibinfo{booktitle}{\emph{International Conference on Machine Learning}}.
  PMLR, \bibinfo{pages}{1871--1880}.
\newblock


\bibitem[Haghiri et~al\mbox{.}(2017)]%
        {haghiri2017comparison}
\bibfield{author}{\bibinfo{person}{Siavash Haghiri}, \bibinfo{person}{Debarghya
  Ghoshdastidar}, {and} \bibinfo{person}{Ulrike von Luxburg}.}
  \bibinfo{year}{2017}\natexlab{}.
\newblock \showarticletitle{Comparison-based nearest neighbor search}. In
  \bibinfo{booktitle}{\emph{Artificial Intelligence and Statistics}}. PMLR,
  \bibinfo{pages}{851--859}.
\newblock


\bibitem[Hanneke et~al\mbox{.}(2014)]%
        {hanneke2014theory}
\bibfield{author}{\bibinfo{person}{Steve Hanneke} {et~al\mbox{.}}}
  \bibinfo{year}{2014}\natexlab{}.
\newblock \showarticletitle{Theory of disagreement-based active learning}.
\newblock \bibinfo{journal}{\emph{Foundations and Trends{\textregistered} in
  Machine Learning}} \bibinfo{volume}{7}, \bibinfo{number}{2-3}
  (\bibinfo{year}{2014}), \bibinfo{pages}{131--309}.
\newblock


\bibitem[Huangfu and Hall(2018)]%
        {huangfu2018parallelizing}
\bibfield{author}{\bibinfo{person}{Qi Huangfu} {and} \bibinfo{person}{JA~Julian
  Hall}.} \bibinfo{year}{2018}\natexlab{}.
\newblock \showarticletitle{Parallelizing the dual revised simplex method}.
\newblock \bibinfo{journal}{\emph{Mathematical Programming Computation}}
  \bibinfo{volume}{10}, \bibinfo{number}{1} (\bibinfo{year}{2018}),
  \bibinfo{pages}{119--142}.
\newblock


\bibitem[Jamieson and Nowak(2011)]%
        {jamieson2011active}
\bibfield{author}{\bibinfo{person}{Kevin~G Jamieson} {and}
  \bibinfo{person}{Robert Nowak}.} \bibinfo{year}{2011}\natexlab{}.
\newblock \showarticletitle{Active ranking using pairwise comparisons}.
\newblock \bibinfo{journal}{\emph{Advances in neural information processing
  systems}}  \bibinfo{volume}{24} (\bibinfo{year}{2011}).
\newblock


\bibitem[Kane et~al\mbox{.}(2018)]%
        {kane2018generalized}
\bibfield{author}{\bibinfo{person}{Daniel~M Kane}, \bibinfo{person}{Shachar
  Lovett}, {and} \bibinfo{person}{Shay Moran}.}
  \bibinfo{year}{2018}\natexlab{}.
\newblock \showarticletitle{Generalized Comparison Trees for Point-Location
  Problems}. In \bibinfo{booktitle}{\emph{45th International Colloquium on
  Automata, Languages, and Programming (ICALP 2018)}}. Schloss
  Dagstuhl-Leibniz-Zentrum fuer Informatik.
\newblock


\bibitem[Kane et~al\mbox{.}(2017)]%
        {kane2017active}
\bibfield{author}{\bibinfo{person}{Daniel~M Kane}, \bibinfo{person}{Shachar
  Lovett}, \bibinfo{person}{Shay Moran}, {and} \bibinfo{person}{Jiapeng
  Zhang}.} \bibinfo{year}{2017}\natexlab{}.
\newblock \showarticletitle{Active classification with comparison queries}. In
  \bibinfo{booktitle}{\emph{2017 IEEE 58th Annual Symposium on Foundations of
  Computer Science (FOCS)}}. IEEE, \bibinfo{pages}{355--366}.
\newblock


\bibitem[Karbasi et~al\mbox{.}(2012)]%
        {karbasi2012comparison}
\bibfield{author}{\bibinfo{person}{Amin Karbasi}, \bibinfo{person}{Stratis
  Ioannidis}, {and} \bibinfo{person}{Laurent Massouli{\'e}}.}
  \bibinfo{year}{2012}\natexlab{}.
\newblock \showarticletitle{Comparison-based learning with rank nets}. In
  \bibinfo{booktitle}{\emph{Proceedings of the 29th International Coference on
  International Conference on Machine Learning}}. \bibinfo{pages}{1235--1242}.
\newblock


\bibitem[Kulkarni et~al\mbox{.}(1993)]%
        {kulkarni1993active}
\bibfield{author}{\bibinfo{person}{Sanjeev~R Kulkarni},
  \bibinfo{person}{Sanjoy~K Mitter}, {and} \bibinfo{person}{John~N
  Tsitsiklis}.} \bibinfo{year}{1993}\natexlab{}.
\newblock \showarticletitle{Active learning using arbitrary binary valued
  queries}.
\newblock \bibinfo{journal}{\emph{Machine Learning}} \bibinfo{volume}{11},
  \bibinfo{number}{1} (\bibinfo{year}{1993}), \bibinfo{pages}{23--35}.
\newblock


\bibitem[Liu et~al\mbox{.}(2009)]%
        {liu2009learning}
\bibfield{author}{\bibinfo{person}{Tie-Yan Liu} {et~al\mbox{.}}}
  \bibinfo{year}{2009}\natexlab{}.
\newblock \showarticletitle{Learning to rank for information retrieval}.
\newblock \bibinfo{journal}{\emph{Foundations and Trends{\textregistered} in
  Information Retrieval}} \bibinfo{volume}{3}, \bibinfo{number}{3}
  (\bibinfo{year}{2009}), \bibinfo{pages}{225--331}.
\newblock


\bibitem[Ministry~of Land and of~Japan(2019)]%
        {datahouse}
\bibfield{author}{\bibinfo{person}{Transport Ministry~of Land, Infrastructure}
  {and} \bibinfo{person}{Tourism of Japan}.} \bibinfo{year}{2019}\natexlab{}.
\newblock \bibinfo{title}{Japan Real Estate Prices}.
\newblock
\newblock
\urldef\tempurl%
\url{https://www.kaggle.com/datasets/nishiodens/japan-real-estate-transaction-prices}
\showURL{%
\tempurl}
\newblock
\shownote{last accessed on 22/10/2022}.


\bibitem[Minka et~al\mbox{.}(2018)]%
        {minka2018trueskill}
\bibfield{author}{\bibinfo{person}{Tom Minka}, \bibinfo{person}{Ryan Cleven},
  {and} \bibinfo{person}{Yordan Zaykov}.} \bibinfo{year}{2018}\natexlab{}.
\newblock \showarticletitle{TrueSkill 2: An improved Bayesian skill rating
  system}.
\newblock \bibinfo{journal}{\emph{Technical Report}} (\bibinfo{year}{2018}).
\newblock


\bibitem[Nanongkai et~al\mbox{.}(2012)]%
        {nanongkai2012interactive}
\bibfield{author}{\bibinfo{person}{Danupon Nanongkai}, \bibinfo{person}{Ashwin
  Lall}, \bibinfo{person}{Atish Das~Sarma}, {and} \bibinfo{person}{Kazuhisa
  Makino}.} \bibinfo{year}{2012}\natexlab{}.
\newblock \showarticletitle{Interactive regret minimization}. In
  \bibinfo{booktitle}{\emph{Proceedings of the 2012 ACM SIGMOD International
  Conference on Management of Data}}. \bibinfo{pages}{109--120}.
\newblock


\bibitem[Nanongkai et~al\mbox{.}(2010)]%
        {nanongkai2010regret}
\bibfield{author}{\bibinfo{person}{Danupon Nanongkai},
  \bibinfo{person}{Atish~Das Sarma}, \bibinfo{person}{Ashwin Lall},
  \bibinfo{person}{Richard~J Lipton}, {and} \bibinfo{person}{Jun Xu}.}
  \bibinfo{year}{2010}\natexlab{}.
\newblock \showarticletitle{Regret-minimizing representative databases}.
\newblock \bibinfo{journal}{\emph{Proceedings of the VLDB Endowment}}
  \bibinfo{volume}{3}, \bibinfo{number}{1-2} (\bibinfo{year}{2010}),
  \bibinfo{pages}{1114--1124}.
\newblock


\bibitem[Negahban et~al\mbox{.}(2012)]%
        {negahban2012iterative}
\bibfield{author}{\bibinfo{person}{Sahand Negahban}, \bibinfo{person}{Sewoong
  Oh}, {and} \bibinfo{person}{Devavrat Shah}.} \bibinfo{year}{2012}\natexlab{}.
\newblock \showarticletitle{Iterative ranking from pair-wise comparisons}.
\newblock \bibinfo{journal}{\emph{Advances in neural information processing
  systems}}  \bibinfo{volume}{25} (\bibinfo{year}{2012}).
\newblock


\bibitem[Qian et~al\mbox{.}(2015)]%
        {qian2015learning}
\bibfield{author}{\bibinfo{person}{Li Qian}, \bibinfo{person}{Jinyang Gao},
  {and} \bibinfo{person}{HV Jagadish}.} \bibinfo{year}{2015}\natexlab{}.
\newblock \showarticletitle{Learning user preferences by adaptive pairwise
  comparison}.
\newblock \bibinfo{journal}{\emph{Proceedings of the VLDB Endowment}}
  \bibinfo{volume}{8}, \bibinfo{number}{11} (\bibinfo{year}{2015}),
  \bibinfo{pages}{1322--1333}.
\newblock


\bibitem[Skala(2013)]%
        {skala2013hypergeometric}
\bibfield{author}{\bibinfo{person}{Matthew Skala}.}
  \bibinfo{year}{2013}\natexlab{}.
\newblock \showarticletitle{Hypergeometric tail inequalities: ending the
  insanity}.
\newblock \bibinfo{journal}{\emph{arXiv preprint arXiv:1311.5939}}
  (\bibinfo{year}{2013}).
\newblock


\bibitem[Stellato et~al\mbox{.}(2020)]%
        {osqp}
\bibfield{author}{\bibinfo{person}{B. Stellato}, \bibinfo{person}{G. Banjac},
  \bibinfo{person}{P. Goulart}, \bibinfo{person}{A. Bemporad}, {and}
  \bibinfo{person}{S. Boyd}.} \bibinfo{year}{2020}\natexlab{}.
\newblock \showarticletitle{{OSQP}: an operator splitting solver for quadratic
  programs}.
\newblock \bibinfo{journal}{\emph{Mathematical Programming Computation}}
  \bibinfo{volume}{12}, \bibinfo{number}{4} (\bibinfo{year}{2020}),
  \bibinfo{pages}{637--672}.
\newblock
\urldef\tempurl%
\url{https://doi.org/10.1007/s12532-020-00179-2}
\showDOI{\tempurl}


\bibitem[Stewart et~al\mbox{.}(2005)]%
        {stewart2005absolute}
\bibfield{author}{\bibinfo{person}{Neil Stewart}, \bibinfo{person}{Gordon~DA
  Brown}, {and} \bibinfo{person}{Nick Chater}.}
  \bibinfo{year}{2005}\natexlab{}.
\newblock \showarticletitle{Absolute identification by relative judgment.}
\newblock \bibinfo{journal}{\emph{Psychological review}} \bibinfo{volume}{112},
  \bibinfo{number}{4} (\bibinfo{year}{2005}), \bibinfo{pages}{881}.
\newblock


\bibitem[Vinco(2022)]%
        {dataplayer}
\bibfield{author}{\bibinfo{person}{Vivo Vinco}.}
  \bibinfo{year}{2022}\natexlab{}.
\newblock \bibinfo{title}{1991-2021 NBA Stats}.
\newblock
\newblock
\urldef\tempurl%
\url{https://www.kaggle.com/datasets/vivovinco/19912021-nba-stats?select=players.csv}
\showURL{%
\tempurl}
\newblock
\shownote{last accessed on 22/10/2022}.


\bibitem[Wang et~al\mbox{.}(2021)]%
        {wang2021interactive}
\bibfield{author}{\bibinfo{person}{Weicheng Wang}, \bibinfo{person}{Raymond
  Chi-Wing Wong}, {and} \bibinfo{person}{Min Xie}.}
  \bibinfo{year}{2021}\natexlab{}.
\newblock \showarticletitle{Interactive Search for One of the Top-k}. In
  \bibinfo{booktitle}{\emph{Proceedings of the 2021 International Conference on
  Management of Data}}. \bibinfo{pages}{1920--1932}.
\newblock


\bibitem[Xie et~al\mbox{.}(2019)]%
        {xie2019strongly}
\bibfield{author}{\bibinfo{person}{Min Xie}, \bibinfo{person}{Raymond Chi-Wing
  Wong}, {and} \bibinfo{person}{Ashwin Lall}.} \bibinfo{year}{2019}\natexlab{}.
\newblock \showarticletitle{Strongly truthful interactive regret minimization}.
  In \bibinfo{booktitle}{\emph{Proceedings of the 2019 International Conference
  on Management of Data}}. \bibinfo{pages}{281--298}.
\newblock


\bibitem[Youtube(2022)]%
        {datayoutube}
\bibfield{author}{\bibinfo{person}{Youtube}.} \bibinfo{year}{2022}\natexlab{}.
\newblock \bibinfo{title}{YouTube Trending Video Dataset (updated daily)}.
\newblock
\newblock
\urldef\tempurl%
\url{https://doi.org/10.34740/KAGGLE/DSV/4364993}
\showDOI{\tempurl}
\newblock
\shownote{US region, version 800, last accessed on 22/10/2022}.


\bibitem[Zheng and Chen(2020)]%
        {zheng2020sorting}
\bibfield{author}{\bibinfo{person}{Jiping Zheng} {and} \bibinfo{person}{Chen
  Chen}.} \bibinfo{year}{2020}\natexlab{}.
\newblock \showarticletitle{Sorting-based interactive regret minimization}. In
  \bibinfo{booktitle}{\emph{Asia-Pacific Web (APWeb) and Web-Age Information
  Management (WAIM) Joint International Conference on Web and Big Data}}.
  Springer, \bibinfo{pages}{473--490}.
\newblock


\end{thebibliography}

\ifsupp 
\ifapx \else
	\clearpage
	\appendix
	
\fi
\fi 

\end{document}
\endinput